\newif\ifsocgproc
\socgprocfalse
\newif\iffullver
\fullvertrue

\ifsocgproc 	%SoCG version
\else	%Full version
	\documentclass[11pt]{article}

	\usepackage[margin=1.25in]{geometry}
	\usepackage{mathpazo}
 	\usepackage{calrsfs}
\fi

\usepackage{hyperref}
\hypersetup{
    colorlinks=true,
    linkcolor=blue
    }

\usepackage{latexsym,color,graphicx,amsmath,amssymb}
\usepackage{microtype,xspace,url}
\usepackage{amsmath,amsthm}
\usepackage{mathtools}
\usepackage{cite}
\usepackage{comment}
\usepackage{enumerate}

\long\def\socgcomm#1{#1}

\def\reals{{\mathbb R}}
\def\sphere{{\mathbb S}}
\def\Fspace{{\mathbb F}}
\def\Kspace{{\mathbb K}}
\def\eps{{\varepsilon}}
\def\bd{{\partial}}
\def\dist{{\rm dist}}
\def\A{\mathcal{A}}

\def\F{\mathcal{F}}

\def\G{\mathcal{G}}
\def\K{\mathcal{K}}

\def\M{\mathcal{M}}
\def\R{\mathcal{R}}

\def\U{\mathcal{U}}

\def\cell{\tau}

\def\XX{\mathsf{X}}
\def\NN{\mathsf{N}}

\def\ceil#1{\lceil #1\rceil}

\def\VeD#1{{#1}^{\mbox{\raisebox{4pt}{\tiny$\|$}}}}
\def\Ak#1{\A_{{\le}#1}}

\def\etal{\textsl{et~al.}}

\newcommand{\old}[1]{{{}}}

\ifsocgproc
	\authorrunning{P. K. Agarwal, M. J. Katz and M. Sharir}
	\Copyright{Pankaj K. Agarwal, Matthew J. Katz and Micha Sharir}

	\ccsdesc[500]{Theory of computation~Computational geometry}

	\author{Pankaj K. Agarwal}
    		{Department of Computer Science, Duke University, Durham, NC 27708, USA}
    		{pankaj@cs.duke.edu}
    			{https://orcid.org/0000-0002-9439-181X}
    		{Partially supported by NSF grants CCF-20-07556, and CCF-22-23870.}
		\author{Matthew J. Katz}
    		{The Stein Faculty of Computer and Information Science, Ben-Gurion University of the Negev, Beer Sheva, Israel}
    		{matya@bgu.ac.il}
    		{https://orcid.org/0000-0002-0672-729X}
    		{Partially supported by Israel Science Foundation Grant 495/23.}
	\author{Micha Sharir}
    		{School of Computer Science, Tel Aviv University, Tel Aviv, Israel}
    		{michas@tauex.tau.ac.il}
    		{https://orcid.org/0000-0002-2541-3763}
    		{Partially supported by Israel Science Foundation Grant 495/23.}

	\keywords{Homothets, Minkowski metric, Shallow cuttings, Nearest-neighbor searching, 
		Intersection and proximity graphs, Reverse-shortest-path problem} 

	\title{Dynamic Nearest-Neighbor Searching Under General Metrics in $\reals^3$ and Its Applications}

	\titlerunning{Dynamic Nearest-Neighbor Searching Under General Metrics in $\reals^3$}

	\relatedversiondetails{Full Version}{https://arxiv.org/abs/0000.00000}

	\EventEditors{Hee-Kap Ahn, Michael Hoffmann, and Amir Nayyeri}
	\EventNoEds{3}
	\EventLongTitle{42nd International Symposium on Computational Geometry
		(SoCG 2026)}
	\EventShortTitle{SoCG 2026}
	\EventAcronym{SoCG}
	\EventYear{2026}
	\EventDate{June 2--5, 2026}
	\EventLocation{New Brunswick, NJ, USA}
	\EventLogo{socg-logo.pdf}
	\SeriesVolume{367}
	\ArticleNo{XX}     % <-- This will be filled in by the typesetters
\else
\newtheorem{theorem}{Theorem}[section]
\newtheorem{lemma}[theorem]{Lemma}

\newtheorem{corollary}[theorem]{Corollary}
 
\theoremstyle{definition}
\newtheorem{remark}{Remark}

\makeatletter
\long\def\@makecaption#1#2{
   \vskip 10pt
   \setbox\@tempboxa\hbox{{\footnotesize {\bf #1.} #2}}
   \ifdim \wd\@tempboxa >\hsize         % IF longer than one line:
       {\footnotesize {\bf #1.} #2\par}% THEN set as ordinary paragraph.
     \else                              %   ELSE  center.
       \hbox to\hsize{\hfil\box\@tempboxa\hfil}
   \fi}
\makeatother

\title{Dynamic Nearest-Neighbor Searching Under General Metrics in $\reals^3$ and Its Applications\thanks{%
  Work by Pankaj Agarwal has been partially supported by NSF grants
  CCF-22-23870  and IIS-24-02823, and by US-Israel Binational Science
  Foundation Grant 2022/131. 
  Work by Matthew Katz has been partially supported by Israel Science Foundation Grant 495/23.
  Work by Micha Sharir has been partially supported by Israel Science Foundation Grant 495/23.
  }
  }

\author{
    Pankaj K. Agarwal\thanks{
    Department of Computer Science, Duke University, Durham, NC 27708, USA;
    {\sf pankaj@cs.duke.edu}}
\and
    Matthew J. Katz\thanks{
    The Stein Faculty of Computer and Information Science, Ben-Gurion University of the Negev, Beer Sheva, Israel;
    {\sf matya@bgu.ac.il}}
\and
    Micha Sharir\thanks{
    School of Computer Science, Tel Aviv University, Tel Aviv, Israel;
    {\sf michas@tauex.tau.ac.il}}
}
\fi

\begin{document}

\ifsocgproc
\nolinenumbers
\fi
\maketitle

\begin{abstract}
	Let $K$ be a compact, centrally-symmetric, strictly-convex region in $\reals^3$, which is 
a semi-algebraic set of constant complexity, i.e. the unit ball of a corresponding metric,
denoted as $\|\cdot\|_K$. Let $\K$ be a set of $n$ homothetic copies of $K$.
This paper contains two main sets of results:
	\begin{enumerate}[(i)] 
	\item   For a storage parameter $s\in[n,n^3]$, $\K$ can be preprocessed in 
		$O^*(s)$ expected time into a data structure of size $O^*(s)$, so 
		that for a query homothet $K_0$ of $K$,
		an intersection-detection query (determine whether $K_0$ intersects any member of $\K$, 
		and if so, report such a member) or a nearest-neighbor query (return
                the member of $\K$ whose $\|\cdot\|_K$-distance from $K_0$ is smallest) can be 
		answered in $O^*(n/s^{1/3})$ time; all $k$ homothets of $\K$ intersecting $K_0$ can be 
		reported in additional $O(k)$ time. In addition, the data structure supports
		insertions/deletions in $O^*(s/n)$ amortized expected time per operation. Here the 
		$O^*(\cdot)$ notation hides factors of the form $n^\eps$, where $\eps>0$ 
		is an arbitrarily small constant, and the constant of proportionality depends on $\eps$.
	\item Let $\G(\K)$ denote the intersection graph of $\K$.  Using the above data 
		structure, breadth-first or depth-first search  on $\G(\K)$ can be 
		performed in  $O^*(n^{3/2})$ expected time.  Combining this result with the so-called 
		shrink-and-bifurcate technique,
			%~\cite{BFKKS,CH,KKSS}, 
                the reverse-shortest-path problem in a suitably defined proximity graph of $\K$ can 
		be solved in $O^*(n^{62/39})$ expected time.
		Dijkstra's shortest-path algorithm, as well as Prim's MST algorithm, on a 
	        $\|\cdot\|_K$-proximity graph on $n$ points in $\reals^3$, 
		with edges weighted by $\|\cdot\|_K$, can also be performed in $O^*(n^{3/2})$ time.
	\end{enumerate}
\end{abstract}

%-------------------------------------
\section{Introduction}
\label{sec:intro}

\ifsocgproc 
\subparagraph*{Problem statement.}
\else
\paragraph{Problem statement.}
\fi
Let $K$ be a compact, centrally-symmetric, strictly-convex region in $\reals^3$, which 
is a semi-algebraic set of constant complexity.\footnote{%
  Roughly speaking, a semi-algebraic set in $\reals^d$ is the set of points in $\reals^d$ that 
  satisfy a Boolean formula over a set of polynomial inequalities; the complexity of a semi-algebraic 
  set is the number of polynomials defining the set and their maximum degree.
  See \cite{BPR} for formal definitions of a semi-algebraic set and its complexity.}
$K$ is the unit ball of a norm
$\|\cdot\|_K$, and we denote its metric as $\dist_K$. 

A \emph{homothet} (or \emph{homothetic copy}) of $K$ is a scaled and translated copy of $K$, expressed as
$K(c,\rho) := \rho K + c$, for $c\in\reals^3$ and $\rho \ge 0$; $c$ is the \emph{center}
of $K(c,\rho)$ and $\rho$ is its \emph{size}. We represent $K(c,\rho)$ as the point 
$(c,\rho)\in\reals^3\times\reals_{\ge 0}$. Let $\Kspace$ be the set of all homothets of 
$K$, represented as the halfspace $x_4\ge 0$ of $\reals^4$.  
The \emph{$K$-distance} $\dist_K(u,v) = \|u-v\|_K$ between points $u,v\in\reals^3$
is the minimum value of $\rho$ for which 
$v\in K(u,\rho)$ (or, equivalently, $u\in K(v,\rho)$). Since $K$ is strictly convex, the 
triangle inequality $\dist_K(u,v) \le \dist_K(u,w) + \dist_K(w,v)$ is sharp unless $w$ lies 
on the segment $uv$.
The $K$-distance of a point $x\in\reals^3$ from a homothet $K(c,\rho)$ is defined as 
$f_{c,\rho}(x) = \dist_K(c,x)-\rho$, 
and the $K$ distance of another homothet $K(c',\rho')$ from $K(c,\rho)$ is $f_{c,\rho}(c')-\rho' = \dist_K(c,c')-\rho-\rho'$. 
Since $K$ is a constant-complexity semi-algebraic set, $f_{c,\rho}$ is 
a semi-algebraic function of constant complexity. 
The homothets $K(c,\rho)$ and $K(c',\rho')$ intersect if the distance between them is non-positive, 
i.e., $\dist_K(c,c') \le \rho+\rho'$, or equivalently, the point $(c',\rho')$ lies 
above the graph of $f_{c,\rho}$ in $\reals^4$, or vice-versa.

Let $\K=\{K_1, \ldots, K_n\} \subset \Kspace$ be a set of $n$ homothets of $K$, where 
$K_i=K(c_i,\rho_i)$. In this paper we study two classes of proximity problems related to $\K$.

\medskip
\begin{description}
	\item[Proximity queries.] We wish to preprocess $\K$ into a data structure so that 
                various proximity queries for a query homothet $K_0$, such as intersection-detection 
		(determine whether $K_0$ intersects any homothet of $\K$; if the answer is yes, return 
                one such intersecting homothet of $\K$), 
		intersection reporting (report all homothets of $\K$ that intersect $K_0$),
		and nearest-neighbor queries (report the homothet of $\K$ at the 
		minimum $K$-distance from $K_0$),
		can be performed efficiently. In addition, the data structure should 
		support insertions and deletions of homothets of $K$.
	\item[Searching in the intersection graph and related problems.] Let $\G(\K)$ denote the intersection graph 
		of $\K$, whose vertices are the homothets of $\K$ and $(K_i,K_j)$ is an edge if 
		$K_i\cap K_j\ne\emptyset$. Similarly, for a threshold parameter $r_0$, the
                \emph{$r_0$-proximity graph} $\G_{r_0}(\K)$ of $\K$ consists of those edges $(K_i,K_j)$
                for which $\dist_K(K_i,K_j) \le r_0$.
% \footnote{Given a set $\Gamma$ of geometric objects 
		% in $\reals^d$, the \emph{intersection graph} $\G(\Gamma)$ of $\Gamma$ has 
		% $\Gamma$ as vertices, and there is an edge $(\gamma_i, \gamma_j)$ in the 
		% graph if $\gamma_i\cap\gamma_j\ne \emptyset$. Given a distance function 
		% $\dist: \Gamma \rightarrow \reals$ and a threshold $r_0$, the 
		% \emph{proximity graph} on $\Gamma$ has an edge $(\gamma_i,\gamma_j)$ if 
		% $\dist(\gamma_i,\gamma_j) \le r_0$.}
		We study efficient implementation of graph-search problems on $\G(\K)$,
                like BFS or DFS.
		We also study the \emph{reverse shortest path} (RSP) problem on $\K$:
		given two elements $K_i,K_j \in \K$  and a parameter $1 \le k < n$, compute
		the smallest value $r^*$ for which $\G_{r^*}(\K)$
% , where $\K_r = \{K(c_i,r_i+r) \mid 1 \le i \le n\}$, 
                contains a path from $K_i$ to $K_j$ with at most $k$ edges. 
\end{description}

%Let $r > 0$ be a parameter, and let $\G_r(\K)$ be the intersection graph of $\K_r$, where $\K_r = \{ K(c,\rho+r) \mid K(c,\rho) \in \K \}$; that is, $\K_r$ is 
%the set of homothetic copies of $K$ that is obtained from $\K$ by expanding each of the elements of $\K$ by $r$. In the RSP problem, we are given 
		%The decision problem associated with the RSP problem (i.e., given $r$, decide whether $\G_r(\K)$ contains such a path) is easily solved by performing a BFS in $\G_r(\K)$ from $K(c_s,\rho_s+r)$, and we employ the \emph{shrink-and-bifurcate} technique~(\cite{BFKKS,CH,KKSS}) to obtain an efficient solution to the RSP problem.      

%-------------------------------------
\ifsocgproc 
\subparagraph*{Related work.}
\else
\paragraph{Related work.}
\fi
There is extensive work on nearest-neighbor (NN) searching in many different fields including
computational geometry, database systems, and machine learning. We refer the readers to 
various survey papers~\cite{AE99,AI17,Cl} for a general overview of known results on this topic. 
Here we briefly mention a few results that are most closely related to the problems studied 
in this paper. Agarwal and Matou\v{s}ek~\cite{AM95} presented a dynamic NN data structure for 
a point set in $\reals^d$, under the Euclidean metric, that answers a query in $O^*(n/s^\phi)$ 
time, where $\phi=1-\tfrac{1}{\ceil{d/2}}$ and $s \in [n,n^{\ceil{d/2}}]$ is a storage parameter, 
using $O^*(s)$ space and preprocessing; the (amortized) update time for insertions/deletions of 
points is $O^*(s/n)$. The bounds were slightly improved -- $n^\eps$ factors were replaced by 
$\log^{O(1)}n$ factors -- in~\cite{Chan:le,KMRSS}.
Kaplan~\etal~\cite{KMRSS} presented a dynamic NN data structure for a set $S$ of points in 
$\reals^2$ under fairly general distance functions. Its performance depends on the 
complexity of the Voronoi diagram of $S$ under that distance function. For the Minkowski 
metric induced by a centrally-symmetric convex region, which is a semi-algebraic set of
constant complexity, their data structure can answer an NN query in $O(\log^2n)$ time, and handle 
insertion and deletion of a point in $O(\beta(n)\log^5n )$ and $O(\beta(n)\log^9n)$ 
amortized expected time, respectively, where $\beta(n)$, a variant of inverse Ackermann's function, is
an extremely slowly growing function; see also~\cite{Liu} for slightly improved bounds. 
These data structures, however, do not extend to $\reals^3$.
%which is what we achieve in this work.

The problem of NN searching under general distance functions
in $\reals^3$ is much less understood.
By reducing NN queries to ball-intersection queries~\cite{AM93} and 
using semi-algebraic range-searching data structures~\cite{AAEZ,MP,AAEKS}, an NN query 
amid the set $\K$ of $n$ homothets in $\reals^3$, as above, can be answered in $O^*(1)$ time 
using $O^*(n^4)$ space and preprocessing, or in $O^*(n^{3/4})$ time using $O(n)$ space and preprocessing. 
Using recent standard techniques (see, e.g., the Appendix in~\cite{AAEKS}),
one can obtain a space/query-time trade-off, as well as handle insertions/deletions of objects.
Using a recent result by Agarwal~\etal~\cite{AES:vd} on the vertical decomposition of the lower 
envelope of trivariate functions, an NN query amid $\K$ can be answered in $O(\log^2 n)$ time 
using only $O^*(n^3)$ space, but this data structure does not support efficient deletions 
(insertions can be handled in a standard manner, using the dynamization technique 
of Bentley and Saxe~\cite{BS80}). Furthermore, it was not known whether an NN query 
amid $\K$ can be answered in $O^*(n^{2/3})$ time using $O^*(n)$ space.

Motivated by numerous applications, there has been work on developing fast algorithms 
for intersection and proximity graphs of $n$ geometric objects. Although the intersection 
graph may have $\Theta(n^2)$ edges, the goal in this line of work is to develop algorithms, 
by exploiting the underlying geometry, that run in $O^*(n)$ time (or at least in subquadratic time).
Cabello and Jej\v{c}i\v{c}~\cite{CabelloJ15}, and subsequently Chan and Skrepetos~\cite{CS}, presented
$O(n \log n)$ implementations of BFS in unit-disk intersection graphs, which was recently 
extended to general disk-intersection graphs by de Berg and Cabello~\cite{BergC25}; 
see also ~\cite{KKSS,KMRSS,Klost23}. The algorithm in~\cite{BergC25} can also perform DFS 
and Dijkstra's algorithm in $O(n\log n)$ time.
%(Balls are a special (much easier) case of homothetic copies of a centrally-symmetric convex object $K$.) 
Katz~\etal~\cite{KSS} showed that BFS in ball-intersection graphs (for congruent or 
arbitrary Euclidean balls)
can be implemented in $O^*(n^{2\phi/(\phi+1)})$ time, where $\phi = \lfloor d/2 \rfloor + 1$, 
so in $O^*(n^{4/3})$ time for $d=3$. The problem of devising efficient implementations for BFS/DFS 
in more general graphs in the plane also has received some attention, 
see, e.g., \cite{AgarwalKKS24,AgarwalKS24,KSS24}.  

The reverse shortest path (RSP) problem for unit disks was studied by 
Wang and Zhao~\cite{WZ}, who gave an $O^*(n^{5/4})$-time solution. 
Kaplan~\etal~\cite{KKSS} improved the runtime to $O^*(n^{6/5})$ using the 
\emph{shrink-and-bifurcate} technique (see~\cite{BFKKS}), and it was recently improved to $O^*(n^{9/8})$ by
Chan and Huang~\cite{CH}. The best-known RSP algorithm for arbitrary disks runs in $O^*(n^{6/5})$ expected time.
%using Chan and Huang's technique~\cite{CH}. Katz et al.~\cite{KSS} studied the RSP problem for balls in $\reals^d$, for $d \ge 3$. 
Katz~\etal~\cite{KSS} presented RSP algorithms for balls in $\reals^3$ --- with 
$O^*(n^{29/21})$ time for congruent balls and $O^*(n^{56/39})$ for arbitrary balls.  
The RSP problem also has been studied for other geometric objects in 
$\reals^2$~\cite{AgarwalKS24,KSS,CH}.  

\ifsocgproc 
\subparagraph*{Our results.} 
Our main result is a dynamic data structure for answering intersection 
and NN queries with a homothet of $\Kspace$ amid the set $\K$ of homothets:
\else
\paragraph{Our results.} 
Our main result is a dynamic data structure for answering intersection 
and NN queries with a homothet of $\Kspace$ amid the set $\K$ of homothets, as stated in the following theorem:
\fi

%---------------------------------------------
\begin{theorem} \label{thm:NN}
Let $K$ be a compact, centrally-symmetric strictly convex semi-algebraic set
in $\reals^3$ of constant complexity.  Let $\K$ be a set of $n$ homothetic 
copies of $K$. For a storage parameter $s \in [n,n^3]$, $\K$ can be maintained in a dynamic data 
	structure with $O^*(s)$ storage, that can answer an intersection-detection or NN-query (under the $K$-distance) 
	with a homothet $K_0$ of $K$ 
	in $O^*(n/s^{1/3})$ time and that supports insertions/deletions in $O^*(s/n)$ 
	amortized expected time. It can report all $k$ objects of $\K$ intersecting $K_0$ in additional $O(k)$~time.
\end{theorem}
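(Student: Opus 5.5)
We reduce both query types to problems on the lower envelope of the trivariate distance functions $f_i := f_{c_i,\rho_i}$, $1\le i\le n$, over $\reals^3$. A query homothet $K_0=K(c_0,\rho_0)$ intersects some member of $\K$ iff $\min_i f_i(c_0)\le\rho_0$. The nearest neighbor of $K_0$ is the minimizer of $f_i(c_0)$, since $\dist_K(K_0,K_i)=f_i(c_0)-\rho_0$. So both queries amount to point location of $(c_0,\rho_0)$ in the arrangement of the graphs of the $f_i$ in $\reals^4$, restricted to the region near the lower envelope. Reporting asks for all graphs lying below the point $(c_0,\rho_0)$, which is a ``level'' query.

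The plan follows the Agarwal--Matou\v{s}ek / Chan shallow-cutting paradigm, adapted to these functions. The key geometric ingredient is a bound of $O^*(nk^2)$ on the complexity of the $(\le k)$-level of $\{f_i\}$. I would derive it via Clarkson--Shor from the $O^*(n^3)$ bound on the lower envelope of trivariate semi-algebraic functions; in fact, for $K$-distance (additively weighted Voronoi) functions I expect an $O(n^2)$ envelope bound, since this is an additively weighted Voronoi diagram in $\reals^3$. I would then combine this with the vertical decomposition of the lower envelope of Agarwal~\etal~\cite{AES:vd}, of complexity $O^*(n^3)$.

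Step 1 is to build, by random sampling, a shallow cutting for the $(\le k)$-level. It consists of $O^*((n/k)^3)$ constant-complexity semi-algebraic cells (prisms over the vertical decomposition of the envelope of a sample of size $\approx n/k$, raised appropriately), which together cover the $(\le k)$-level. Each cell is crossed by $O(k)$ graphs, which form its conflict list.

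Step 2 is the static structure with storage parameter $s$. We set $r=s^{1/3}$ and compute a shallow cutting at level $n/r$, giving $O^*(r^3)=O^*(s)$ cells with conflict lists of size $n/r$. Within each conflict list we answer queries by brute force, or recursively by partition-tree-based semi-algebraic range searching. A query locates $c_0$ in the cutting in polylog time and then scans the conflict list, for $O^*(n/s^{1/3})$ total time. If the query point lies above the cutting's ceiling, then $K_0$ meets at least $n/r$ homothets, so intersection detection is trivially affirmative. For reporting we use the standard doubling over cuttings at levels $2^j$, which yields $O^*(n/s^{1/3})+O(k)$.

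Step 3 handles dynamization and is the main obstacle. Insertions can be treated via Bentley--Saxe, but deletions break lower-envelope structures. I would follow Chan's dynamic lower envelope approach, or the Agarwal--Matou\v{s}ek scheme of rebuilding shallow cuttings lazily. Each shallow-cutting cell tolerates deletions by marking, provided the level is not depleted. We rebuild a substructure when a constant fraction of its elements is deleted, and we maintain $O(\log n)$ levels of cuttings. For the charging, each update touches the $O^*(s/n)$ amortized storage per element, and rebuilds are randomized, which gives expected bounds. The delicate points are twofold. First, we must guarantee that after deletions a query point still finds a cell whose conflict list contains the true minimum. Second, we must ensure that the vertical decomposition from~\cite{AES:vd} can be recomputed within $O^*(r^3)$ expected time. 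The NN query is reduced to intersection detection via the standard ``lift by $\rho$'' argument or parametric search, both of which cost only an $n^\eps$ factor.
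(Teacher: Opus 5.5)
Your proposal has two genuine gaps. The more serious one is in Step~1, which is exactly where the paper's main technical work lies.

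\textbf{Step 1: the shallow cutting.} Your level bound is miscomputed. Clarkson--Shor applied to an $O^*(m^3)$ envelope bound gives $O(k^4\psi(n/k))=O^*(n^3k)$ for the $(\le k)$-level in $\reals^4$, not $O^*(nk^2)$. The latter would make the $K$-Voronoi diagram near-linear, which already fails for Euclidean point sites in $\reals^3$. More importantly, no combinatorial level bound produces a cutting. Prisms below the decomposed lower envelope of a sample $R$ (which is all \cite{AES:vd} provides) do not cover $\Ak{n/t}(\F)$. Wherever the nearest homothet of a point $x$ belongs to $R$, the envelope of $R$ over $x$ sits at level $0$ of $\A(\F)$.

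The standard remedy is to use the $q$-level of $R$ with $q=\Theta(\log t)$, which by the shallow-net property lies between levels $n/t$ and $O(n/t)$ of $\A(\F)$. But then you must cut the cells of a $q$-level of trivariate functions into $O^*(|R|^3q)$ constant-complexity pieces. This is not known for general trivariate functions, and it is precisely what ``raised appropriately'' hides.

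The paper solves this using the structure of $K$-distances. By strict convexity, a ray from $c_i$ meets each bisector $B_{ij}$ at most once, with $f_i\le f_j$ before the crossing and $f_j<f_i$ after it (Lemma~\ref{lem:mono}). So in spherical coordinates about $c_i$ the bisectors are graphs of bivariate functions $g_{ij}$ on $\sphere^2$. Every cell of the $k$-level lying on $f_i$ then projects to a cell of level at most $k$ in $\A(G^{(i)})$. The Kaplan~\etal\ bound for vertical decompositions of $(\le k)$-levels of bivariate functions gives $O^*(n^2k)$ pseudo-cones per $i$, hence $O^*(n^3k)$ overall (Theorem~\ref{thm:decomp}). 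Only with this do you get vertical $(n/t)$-shallow cuttings of size $O^*(t^3)$.

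\textbf{Step 2: the trade-off.} Your trade-off does not achieve $O^*(s)$ storage. Each prism's ceiling lies above level $n/r$, so every conflict list has $\Theta(n/r)$ elements. With $r=s^{1/3}$, storing the lists therefore costs up to $O^*(r^3\cdot n/r)=O^*(ns^{2/3})$. This exceeds $s$ for every $s<n^3$: for example, $n^{5/3}$ at $s=n$, and $n^2$ at $s=n^{3/2}$, the value used in the applications.

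The correct balancing takes $r=(s/n)^{1/2}$. It then needs, for each conflict list of size $m$, a linear-size structure with $O^*(m^{2/3})$ query time. Generic semi-algebraic range searching in $\reals^4$ gives only $O^*(m^{3/4})$. This is why the paper builds a separate structure: a Sharir--Shaul shallow partition tree on $\K^*$. Its test set consists of the sets $\Phi_\tau=\bigcup_{(c_0,\rho_0)\in\tau}f^+_{c_0,\rho_0}$, taken over those cells $\tau$ of the 4D vertical decomposition of a sample that have at most $n/r$ functions passing below them. These sets are unbounded upward, so the complement of the union of $m$ of them is the region below a lower envelope of trivariate functions. By \cite{AES:vd} that region decomposes into $O^*(m^3)$ cells, which gives $O^*(n^{2/3})$ query time with linear storage (Lemma~\ref{lem:linear-qtime}). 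Combining this with the $O^*(n^3)$-storage tree of Lemma~\ref{lem:log-qtime} yields Theorem~\ref{thm:tradeoff}.

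\textbf{Step 3: a smaller point.} Your deletion charging (each update touches $O^*(s/n)$ storage) holds only on average. You need the Agarwal--Matou\v{s}ek partition of each node's functions into classes in which every function lies in $O(t^{2+\delta})$ conflict lists. Without it, an adversary can first delete the functions that lie in $\Omega(t^3)$ lists.
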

%---------------------------------------------
Plugging Theorem~\ref{thm:NN} into Eppstein's technique~\cite{Epp}, or its enhancement by Chan~\cite{Chan:bcp},
for maintaining bichromatic closest pairs, using $s=n^{3/2}$, we obtain:
%--------------------------------
\begin{corollary}
	\label{cor:bcp}
Let $K$ be a compact, centrally-symmetric strictly-convex semi-algebraic set
in $\reals^3$ of constant complexity. Let $\K, \K'$ be two sets of homothets of $K$
of combined size $n$. The $K$-closest pair between $\K$ and $\K'$,
under insertions/deletions of homothets (in $\K$ and $\K'$), can be 
maintained in $O^*(n^{1/2})$ amortized expected time per update.
\end{corollary}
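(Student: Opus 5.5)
The corollary follows from Eppstein's dynamic bichromatic closest-pair framework~\cite{Epp}, in Chan's enhanced form~\cite{Chan:bcp}, with Theorem~\ref{thm:NN} as the underlying dynamic nearest-neighbor structure. Eppstein's reduction applies to any distance function for which one has a dynamic data structure supporting insertions, deletions, and nearest-neighbor queries. If these operations cost $P(n)$ preprocessing and $U(n)$ per update and query, the reduction maintains the bichromatic closest pair in $O(U(n)\log^2 n)$ amortized time per update, and Chan's version improves the polylogarithmic overhead. The main thing to check is that our $K$-distance fits the framework.

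\emph{Step 1: check the framework's hypotheses.} The distance between homothets is $d(K_i,K_j)=\dist_K(c_i,c_j)-\rho_i-\rho_j$. This is symmetric. It need not be a metric: it can be negative for intersecting homothets, and it violates the triangle inequality. However, Eppstein's conga-line construction needs only a symmetric pairwise distance and an exact nearest-neighbor oracle. It repeatedly queries, for an element of one color, its nearest neighbor in the other color among a current subset, and never uses the triangle inequality. The oracle is Theorem~\ref{thm:NN} applied with query homothet $K_0=K_i$. The bichromatic version needs two structures, one built on $\K$ and one on $\K'$, and each answers queries from elements of the other set. Symmetry ensures the closest pair is the same from either side.

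\emph{Step 2: choose $s$.} Theorem~\ref{thm:NN} gives query time $O^*(n/s^{1/3})$ and amortized expected update time $O^*(s/n)$. Equating the two gives $s^{4/3}=n^2$, so $s=n^{3/2}$, which lies in the allowed range $[n,n^3]$. Both costs then become $O^*(n^{1/2})$, using $O^*(n^{3/2})$ storage.

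\emph{Step 3: account for the overhead.} Each update to $\K\cup\K'$ triggers $O(\log^2 n)$ amortized nearest-neighbor queries and structure updates in Eppstein's scheme, or polylogarithmically many in Chan's. The total is therefore $O^*(n^{1/2})$ amortized expected time per update. Two technical points remain. First, the bound in Theorem~\ref{thm:NN} is expected amortized, and linearity of expectation composes these costs without difficulty. Second, the sizes of the structures change as elements are inserted and deleted. The standard fix is to rebuild from scratch whenever $n$ doubles or halves; a rebuild costs $O^*(n^{3/2})$, and charging it to the $\Omega(n)$ updates since the previous rebuild adds only $O^*(n^{1/2})$ amortized per update.

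The only step needing real care is Step~1, namely that the reduction works for this non-metric distance between homothets. I expect no obstruction, because the conga-line analysis uses only symmetry and exact nearest-neighbor queries.
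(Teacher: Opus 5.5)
Your proposal is correct and follows the paper's own argument. The paper likewise plugs Theorem~\ref{thm:NN} with $s=n^{3/2}$ into Eppstein's dynamic bichromatic closest-pair technique, or Chan's refinement; this choice balances the $O^*(n/s^{1/3})$ query cost against the $O^*(s/n)$ update cost at $O^*(n^{1/2})$, and the polylogarithmic overhead is absorbed into $O^*(\cdot)$. Your extra checks, that the framework needs only an exact nearest-neighbor oracle for $\dist_K(c,c')-\rho-\rho'$ rather than a metric, and that the structures are rebuilt periodically as $n$ changes, are details the paper leaves implicit.
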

%--------------------------------

Theorem~\ref{thm:NN} is obtained by presenting two data structures. The first 
one (see Section~\ref{subsec:log})
uses $O^*(n^3)$ space, answers a query in $O^*(1)$ time, and handles an insertion/deletion in $O^*(n^2)$ amortized expected time. 
(The data structure in~\cite{AES:vd} also obtains a similar space/query-time bound but it cannot handle deletions efficiently.)
This data structure is based on constructing \emph{vertical shallow cuttings}, a notion originally introduced in~\cite{CT16,Chan:le},  of the graphs of the distance functions of $\K$. Roughly speaking,
for a parameter $t\ge 1$, a \emph{vertical $(n/t)$-shallow 
cutting} of $\K$ is a collection of pairwise openly disjoint semi-unbounded \emph{pseudo-prisms} 
that covers the region lying below the $k$-level of 
the arrangement of $\K$, where
each prism, consisting of all points that lie 
vertically below a constant-complexity semi-algebraic 
$3$-dimensional region, intersects the graphs of only $O(n/t)$ functions of $\K$.
The random-sampling based technique used in~\cite{Chan:le,KMRSS} to construct a vertical shallow cutting 
does not immediately extend to our setting because one 
needs to decompose the $q$-level, for some parameter $q>0$, in the 
4D arrangement of the distance functions of $\K$ 
into $O^*(n^3(q+1))$ constant-complexity cells. Although a recent 
result~\cite{AES:vd} constructs such a decomposition of the $0$-level (with $O^*(n^3)$ cells), it does not 
extend to larger values of $q$.
This problem remains elusive for arrangements of arbitrary trivariate semi-algebraic functions of constant complexity, 
but we develop a desired decomposition technique for our setting
(in Section~\ref{sec:vd}) by exploiting the 
structural geometric properties of the distance functions of $\K$. This result is one 
of the main technical contributions of the paper. Using our result on the decomposition of the $q$-level,
we construct a vertical $(n/t)$-shallow 
cutting of $\K$ of size $O^*(t^3)$, for any $t$ (Section~\ref{sec:shallow}). 
\iffullver
We build a tree data structure using vertical shallow cuttings
and adapt the technique in~\cite{AM95} to handle efficient deletions of 
homothets from $\K$; as noted, insertions are handled using the standard Bentley-Saxe technique~\cite{BS80}.
\fi

The second data structure is a linear-size partition tree, with $O^*(n^{2/3})$ query time,
constructed on the point set $\K^* = \{ (c_i, \rho_i) \mid 1 \le i \le n\} \subset\reals^4$,
that answers intersection-detection queries on $\K$ with a homothet of 
$\Kspace$ as a query. The main 
technical challenge we face here is the construction of a so-called \emph{test set} of $\K$ 
(see \cite{ShSh} and below) of size $r^{O(1)}$, which represents well the distance functions of all 
\emph{$(n/r)$-shallow} homothets in $\Kspace$, i.e., the homothets that intersect 
at most $n/r$ elements of $\K$. By adapting the technique in~\cite{ShSh} and again exploiting 
the structural geometric properties of the distance functions, we show 
\iffullver
(in Section~\ref{subsec:linear}) 
\fi
that a test set, of size $O^*(r^4)$, with the desired properties can be constructed efficiently.

\iffullver
Using Theorem~\ref{thm:NN} and Corollary~\ref{cor:bcp}, we obtain our second set of results 
related to searching in the intersection and proximity graphs of $\K$.
\else
Using Theorem~\ref{thm:NN} and Corollary~\ref{cor:bcp}, we obtain our second set of results:
\fi
%---------------------------------------------
\begin{theorem} \label{thm:bfs}
(a) Given a set $\K$ of $n$ homothets of a constant-complexity, strictly convex, centrally-symmetric semi-algebraic set $K$,
BFS or DFS in the intersection graph of $\K$ can be performed in $O^*(n^{3/2})$ expected time. 

(b) The single-source shortest-path tree as well as the minimum spanning forest of the
$r_0$-proximity graph of a set of $n$ points, for any threshold parameter $r_0$, 
	weighted by the pairwise $K$-distances, can be computed in $O^*(n^{3/2})$ time.
\end{theorem}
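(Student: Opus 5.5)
We plan to reduce the graph searches to the standard template of "dynamic deletion plus intersection-detection" queries. We then instantiate Theorem~\ref{thm:NN} with $s=n^{3/2}$. This choice balances the query time $O^*(n/s^{1/3})=O^*(n^{1/2})$ against the update time $O^*(s/n)=O^*(n^{1/2})$.

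For part (a), we build the structure of Theorem~\ref{thm:NN} on all of $\K$ in $O^*(n^{3/2})$ expected time. We then run BFS as follows. Initially the source $K_s$ is deleted from the structure and placed in the queue. When a homothet $K_i$ is dequeued, we repeatedly issue intersection-detection queries with $K_0=K_i$. Each query that returns some $K_j$ triggers three actions: we delete $K_j$ from the structure, set its parent to $K_i$, and enqueue it. When the query reports no intersection, we move to the next element of the queue; if the queue empties and undiscovered homothets remain, we restart from one of them.

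The accounting is as follows. Every query either discovers a new vertex, and each vertex is discovered exactly once, or is the unique failing query for the current dequeued vertex. Hence there are at most $2n$ queries and at most $n$ deletions, for a total of $O^*(n\cdot n^{1/2})=O^*(n^{3/2})$ expected time. Correctness holds because the structure always contains exactly the undiscovered vertices, so each discovered vertex gets its BFS parent and layer correctly.

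DFS is the same, except that we keep a stack. We query with the top element $K_i$. If a neighbor $K_j$ is found, we delete it and push it; otherwise we pop $K_i$. The same charging argument gives $O(n)$ queries and $O(n)$ deletions.

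For part (b), the points are homothets of size $0$, and adjacency in $\G_{r_0}$ means $\dist_K\le r_0$.

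For Prim's algorithm we use Corollary~\ref{cor:bcp}, maintaining the $K$-closest bichromatic pair between the current tree $T$ and the remaining points $P\setminus T$. At each step we take that pair $(u,v)$. If $\dist_K(u,v)>r_0$, the current component of the forest is complete, and we start a new tree from an arbitrary remaining point. Otherwise we add the edge and move $v$ from $P\setminus T$ to $T$ using $O(1)$ updates. With $n$ steps at $O^*(n^{1/2})$ amortized each, the total is $O^*(n^{3/2})$. Correctness follows from the cut property, since the closest pair across the cut is the minimum edge leaving $T$.

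For Dijkstra we use the same bichromatic framework but with additively weighted points. The set $\K$ consists of homothets $K(u,d(u))$ for settled vertices $u$ with distance label $d(u)$, and $\K'$ consists of the unsettled points as size-$0$ homothets. Then $\dist_K(u,v)-0=\|u-v\|_K$, which does not include the $d(u)$ term we need. So instead we would use an additively weighted nearest-neighbor structure, querying $\min_u d(u)+\|u-v\|_K$. To get this, we represent a settled $u$ as $K(u,-d(u))$, formally a homothet with negative radius. The vertical shallow cutting and partition-tree machinery behind Theorem~\ref{thm:NN} should apply unchanged, since it only concerns the functions $f_{c,\rho}(x)=\dist_K(c,x)-\rho$. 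We accept this extension of Theorem~\ref{thm:NN} to arbitrary real $\rho$ and plug it into Chan's bichromatic closest-pair scheme with the matching $O^*(n^{1/2})$ update time.

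The $r_0$ threshold only matters for edges into unsettled vertices with $\|u-v\|_K\le r_0$, and ignoring it loses correctness. We handle it following the standard de Berg–Cabello observation, adapted as follows. When a vertex $u$ is settled, the unsettled vertices within distance $r_0$ of $u$ can be found by the query structure (reporting in $O(k)$ extra time). Each settled $u$ is then kept only as long as some unsettled vertex remains within $r_0$ of it.

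The main obstacle is enforcing the $r_0$ constraint in Dijkstra without losing the $n^{1/2}$ bound per operation. Our first attempt will be the uniform-length observation: for $r_0$-proximity graphs with weights $\|\cdot\|_K\le r_0$, the closest pair between settled and unsettled vertices, restricted to pairs at distance $\le r_0$, gives the correct next vertex. We enforce the restriction by taking the minimum over the additively weighted nearest neighbor among candidates whose plain $K$-distance is at most $r_0$. This is a two-level (distance-filtered) query, realized via the multilevel version of the structure at an extra $n^{\eps}$ factor.
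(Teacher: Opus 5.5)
Your BFS/DFS argument and your Prim argument match the paper's, and your rule for the case $\dist_K(u,v)>r_0$ in Prim (the current tree is a finished component, so start a new one) is the correct one. Your negative-radius encoding for Dijkstra is the paper's idea in another form. The paper uses $K_p=p+(A-\omega(p))K$ with $A\ge\max_p\omega(p)$; for example $A=nr_0$ works, since a shortest path has fewer than $n$ edges, each of length at most $r_0$. This shift changes every red--blue distance by the same constant, so Corollary~\ref{cor:bcp} applies as stated and you need not assume an extension of Theorem~\ref{thm:NN} to arbitrary real $\rho$.

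The genuine gap is how you enforce the $r_0$ constraint in Dijkstra. Your eager bookkeeping reports all unsettled $r_0$-neighbors when $u$ is settled and keeps $u$ only while one remains. This can cost $\Theta(n^2)$: in Dijkstra a reported unsettled vertex is not removed, so many settled vertices may report it (take $n$ points pairwise within $r_0$). Your fallback is a distance-filtered, additively weighted NN structure, built as a dynamic multilevel structure and fed into Chan's scheme. That is only asserted. Nothing in the paper provides it, and you would have to prove its $O^*(n/s^{1/3})$ query and $O^*(s/n)$ update trade-off, and re-derive Corollary~\ref{cor:bcp} for this filtered, non-homothet distance.

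The missing idea is that no filtering is needed, because the constraint can be enforced lazily.

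\begin{enumerate}[(i)]
\item Compute the unconstrained weighted BCP $(p^*,q^*)=\arg\min_{p\in V,\,q\in U}\bigl(\omega(p)+\dist_K(p,q)\bigr)$ with Corollary~\ref{cor:bcp}.
\item If $\dist_K(p^*,q^*)\le r_0$, the pair is feasible. The unconstrained minimum is a lower bound for the constrained one, so $q^*$ is the correct next vertex, with $\omega(q^*)=\omega(p^*)+\dist_K(p^*,q^*)$.
\item Otherwise, $q^*$ also minimizes $\dist_K(p^*,q)$ over $q\in U$, so every point of $U$ is farther than $r_0$ from $p^*$. Since $U$ only shrinks, $p^*$ will never have an edge into $U$, and you simply delete $p^*$ from $V$.
\end{enumerate}

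Deleted vertices have no edges into $U$, so taking the minimum over the current $V$ loses nothing. Each step either moves a point from $U$ to $V$ or deletes one from $V$. That gives $O(n)$ BCP updates at $O^*(n^{1/2})$ amortized each, hence $O^*(n^{3/2})$ in total. This is exactly your invariant (keep a settled vertex only while it has an unsettled $r_0$-neighbor), detected only when that vertex surfaces as $p^*$.
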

%---------------------------------------------
%In particular, Theorem~\ref{thm:bfs}(b) applies to the $r_0$-proximity graph on any set of $n$ points in the plane, under the $K$-distance.

Combining Theorem~\ref{thm:bfs}(a) with the shrink-and-bifurcate 
technique~\cite{BFKKS,KKSS} yields:

%---------------------------------------------
\begin{theorem} \label{thm:rsp}
Given a set $\K$ of $n$ homothets of a constant-complexity, strictly convex, centrally-symmetric semi-algebraic 
set $K$, two designated elements $K$, $K'\in \K$, and an integer $k < n$, the RSP problem on $\K$, of finding
the smallest $r^*$ for which $G_{r^*}(\K)$ contains a path between $K$ and $K'$ of length at most $k$,
can be solved in $O^*(n^{62/39})$ expected time.
\end{theorem}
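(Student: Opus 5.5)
The plan is to instantiate the shrink-and-bifurcate framework of~\cite{BFKKS,KKSS}, in the form used by Katz~\etal~\cite{KSS} for arbitrary balls in $\reals^3$. We use Theorem~\ref{thm:bfs}(a) as the decision procedure and the data structure of Theorem~\ref{thm:NN} for the auxiliary counting and reporting steps.

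\textbf{Critical values and the decision procedure.} The critical values of the problem are the pairwise $K$-distances $r_{ij}=\dist_K(c_i,c_j)-\rho_i-\rho_j$, so $r^*$ is one of these $O(n^2)$ numbers. Note that $(K_i,K_j)\in\G_r(\K)$ iff $\dist_K(c_i,c_j)\le \rho_i+\rho_j+r$. Hence $\G_r(\K)$ is exactly the intersection graph of the expanded homothets $K(c_i,\rho_i+r/2)$. For a fixed $r$ we can therefore decide whether $r\ge r^*$ by running BFS from the first designated element in this intersection graph and checking whether the other is reached within $k$ levels. By Theorem~\ref{thm:bfs}(a) this takes $D(n)=O^*(n^{3/2})$ expected time.

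\textbf{Shrink phase.} Here we want an interval $(\alpha,\beta]$ containing $r^*$ and at most $L$ critical values, where $L$ is a parameter to be fixed later. We need a procedure that counts, and samples uniformly from, the pairs with $r_{ij}\in(\alpha,\beta]$. Counting the pairs with $r_{ij}\le r$ amounts to $n$ intersection-counting queries with the homothets $K(c_i,\rho_i+r)$ amid $\K$. Using the partition-tree/shallow-cutting structure behind Theorem~\ref{thm:NN} in its counting form (same trade-off), with $s=n^{3/2}$, this takes $O^*(n\cdot n/s^{1/2\cdot 2/3}+s)=O^*(n^{3/2})$ time. Canonical subsets also allow random sampling from the pairs in a range.

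We then draw a random sample of about $(n^2/L)\log n$ critical values from the current interval and sort them. Binary search with the decision procedure locates $r^*$ between two consecutive samples, and with high probability the resulting interval contains $O^*(L)$ critical values. This costs $O^*(D(n)\log n + n^{3/2}+n^2/L)$. Finally, we explicitly report the $O^*(L)$ pairs with $r_{ij}\in(\alpha,\beta]$ using the reporting version of Theorem~\ref{thm:NN}.

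\textbf{Bifurcate phase.} At this point every edge with $r_{ij}\le\alpha$ is certain and every edge with $r_{ij}>\beta$ is absent. Only the $L$ reported pairs are undetermined. Following~\cite{KKSS,KSS}, we first compute the BFS layers (hop distances) of the certain graph, using Theorem~\ref{thm:bfs}(a) and contracting certain components where appropriate. This leaves a reduced problem in which only the $L$ uncertain edges can change the answer. We then search over the $L$ candidate values, each test costing time near-linear in $L$ plus the cost of the structural preprocessing, and recurse/bifurcate as in~\cite{KSS}. The total cost has the form $O^*(D(n) + n^{3/2} + n^2/L + T_{\rm bif}(n,L))$.

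\textbf{Choice of $L$ and the main obstacle.} We finish by substituting $D(n)=n^{3/2}$ into the recurrence of~\cite{KSS} and choosing $L$ to balance the shrink cost against the bifurcation cost. With $D=n^{4/3}$ that recurrence gave $n^{56/39}$, so we expect the same balancing with $D=n^{3/2}$ to give $n^{62/39}$. The main obstacle is checking that the bifurcation step of~\cite{KSS} uses only operations that our setting supports within the same time bounds. These operations are dynamic intersection detection and deletion with expanded homothets, and BFS restricted to subsets. Theorem~\ref{thm:NN} supplies them with the same exponents as the ball case once $D$ is replaced. Once this is confirmed, the exponent calculation is routine.
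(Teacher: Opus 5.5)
Your proposal has a genuine gap, in exactly the two places you treat as routine. \textbf{Shrink phase.} The structures behind Theorem~\ref{thm:NN} are \emph{shallow} structures: vertical shallow cuttings of the $(\le k)$-level, and a partition tree driven by a test set for $(n/r)$-shallow ranges. They support emptiness and reporting queries, not counting, and there is no ``counting form with the same trade-off.'' Counting the pairs with $r_{ij}\le r$ is full semi-algebraic range counting, with points in $\reals^4$ and 4-parameter ranges, and costs $O^*(n^{8/5})$ offline. The paper feeds this into the Chan--Huang procedure~\cite{CH} to shrink to an interval containing at most $L$ critical values in $O^*(n^{8/5}/L^{4/5})$ expected time. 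Your sampling of $(n^2/L)\log n$ values costs $n^2/L$, which is too large to reach the target exponent. Your final reporting step also fails, because the range $r_{ij}\in(\alpha,\beta]$ is not shallow: a reporting query with $K(c_i,\rho_i+\beta)$ returns every $j$ with $r_{ij}\le\beta$, which can be $\Theta(n^2)$ pairs. As a sanity check, if you really could count in $O^*(n^{3/2})$ and cheaply enumerate the pairs in an interval, distance selection plus binary search would already give $O^*(n^{3/2})$, and no bifurcation would be needed.

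\textbf{Bifurcation phase.} Bifurcation is not a search over explicitly known uncertain edges. It is a simulation of the decision procedure at the unknown $r^*$, and it works only if every $r$-dependent comparison in the simulation yields a critical value of the pairwise type that the shrink stage controls. The paper's key observation is that this holds for the fast-query structure: it is built at $r=0$, and a query only locates the point $(c_0,\rho_0+2r)$ relative to functions of $\F$. It fails for the linear-size partition tree, whose queries test a surface against decomposition cells and generate critical values with many degrees of freedom. So the $O^*(n^{3/2})$ BFS of Theorem~\ref{thm:bfs}(a) cannot be simulated at that cost, contrary to your claim that Theorem~\ref{thm:NN} supplies the needed operations ``with the same exponents.'' The paper instead simulates a BFS that uses the fast-query structure on $n/t$ groups of size $t=n^{1/3}$, giving $D_1(n)=O^*(n^{5/3})$. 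The unsimulated decision calls keep $D_0(n)=O^*(n^{3/2})$. A modified phase analysis (thresholds $X,Y$ with $XY=D_0$) gives bifurcation cost $O^*(L^{1/2}D_0^{1/2}D_1^{1/2})=O^*(L^{1/2}n^{19/12})$. Balancing this against $n^{8/5}/L^{4/5}$ gives $L=n^{1/78}$ and the bound $O^*(n^{62/39})$. Your exponent extrapolation is also wrong. Plugging $D_0=D_1=n^{3/2}$ into the balancing of~\cite{KSS} gives $L=n^{1/13}$ and $n^{20/13}=n^{60/39}$, not $n^{62/39}$. The $62/39$ arises precisely from the asymmetry $D_1\ne D_0$ that your proposal misses.
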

%---------------------------------------------
\iffullver
\else
Because of lack of space, the proofs of Theorems~\ref{thm:bfs} and~\ref{thm:rsp} and of some of the lemmas are 
omitted from this version.
\fi

%------------------------------------------------------------
\ifsocgproc 
\section{Decomposing the $k$-Level of an Arrangement of Distance Functions} \label{sec:vd}
\else
\section{Decomposition of the $k$-Level of an Arrangement of Distance Functions} \label{sec:vd}
\fi

\ifsocgproc 
\subparagraph*{Preliminaries.}
\else
\paragraph{Preliminaries.}
\fi
Let $\K = \{K_1, \ldots, K_n\}$, where each $K_i$ 
is a homothetic copy of $K$ represented by a point $K_i^* = (c_i,\rho_i)$ in $\reals^4$.
%For technical reasons, as well as simplicity of the exposition, we assume that the homothets in $\K$ are \emph{non-nested}, i.e., $K_i \not\subseteq K_j$ for any pair $i\ne j$.
% Later we will discuss how to handle nested homothets.
% \pankaj{Make sure that we do this.} \micha{So far we don't.}
%Handling nested copies require additional work, which we omit in this version.
For $1 \le i \le n$, set $f_i = f_{c_i,\rho_i}$, and let 
$\F := \F(\K) = \{ f_i \mid 1 \le i \le n\}$.
(As noted, distances from points $x$ that lie inside some copy $K_i$, still measured by $f_i(x)$, are nonpositive.)
We will not distinguish between a function of $\F$ and its graph.
For a point $x\in\reals^3$, $K_i$ is the \emph{nearest neighbor} of $x$ in $\K$ 
(under the $K$-distance $\dist_K$), i.e., $i = \arg\min_{1 \le j \le n} f_j(x)$, if
$f_i$ appears on the lower envelope of $\F$ at $x$. 

The \emph{level} of a point $p\in\reals^4$ in $\A(\F)$ is the number of functions in $\F$ whose graphs lie below $p$.
For a parameter $k < n$, the \emph{$k$-level} of $\A(\F)$, denoted $\A_k(\F)$, is the 
(closure of the) locus of all points on $\bigcup \F$ whose level is $k$.
We define the \emph{$(\le k)$-level} of $\A(\F)$, denoted $\A_{\le k}(\F)$, to be 
the (closure of the) set of all points in $\reals^4$ of level at most $k$, i.e., 
the set of points that lie on or below $\A_k(\F)$ (not necessarily on $\cup\F$).
The projection of $\A_k(\F)$ onto $\reals^3$, 
denoted by $\M_k = \M_k (\K)$, is a subdivision of $\reals^3$. 
If a cell $\cell^\downarrow$ of $\M_k$ is the projection of a cell $\cell$ of 
$\A_k(\F)$ that lies on the graph of $f_i$, 
then $K_i$ is the $k$-th nearest neighbor in $\K$ (under the $K$-distance function)
of all points in $\cell^\downarrow$. Note that $A_0(\F)$ is the lower envelope of $\F$, 
and $\M_0(\K)$ is the Voronoi diagram of $\K$ (under the $K$-distance).

\ifsocgproc 
\subparagraph*{The decomposition.}
\else
\paragraph{The decomposition.}
\fi
The main technical result of this section is that $\A_k(\F)$ can be decomposed into 
$O^*(n^3(k+1))$ \emph{elementary cells}, namely, constant-complexity semi-algebraic regions, 
each homeomorphic to a ball; see below for a more precise definition.
%We begin with two auxiliarly lemmas:
%the following simple observation. 
%The proofs of the following sequence of lemmas are delegated to Appendix~\ref{sec:proof}.

\iffullver
%--------------------------------
\begin{lemma}
\label{lem:nested}
$K_i$ is contained in $K_j$ if and only if $\dist_K(c_j,c_i) \le \rho_j -\rho_i$.
\end{lemma}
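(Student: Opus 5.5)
We prove the two implications separately. Throughout we use three facts. The homothet $K(c,\rho)$ is exactly the set $\{x : \dist_K(c,x)\le \rho\}$. The norm $\|\cdot\|_K$ is centrally symmetric, so $\dist_K$ is symmetric. And $\dist_K$ satisfies the triangle inequality. We may assume $\rho_i>0$; the degenerate case $\rho_i=0$ is immediate, since then $K_i=\{c_i\}$ and both sides say $\dist_K(c_j,c_i)\le\rho_j$.

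\emph{Sufficiency.} Assume $\dist_K(c_j,c_i)\le\rho_j-\rho_i$, and let $x\in K_i$, so $\dist_K(c_i,x)\le\rho_i$. The triangle inequality gives
$\dist_K(c_j,x)\le \dist_K(c_j,c_i)+\dist_K(c_i,x)\le(\rho_j-\rho_i)+\rho_i=\rho_j$.
Hence $x\in K_j$, and therefore $K_i\subseteq K_j$.

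\emph{Necessity.} Assume $K_i\subseteq K_j$.
\begin{enumerate}[(i)]
\item If $c_i=c_j$, then $K_i\subseteq K_j$ forces $\rho_i\le\rho_j$. Otherwise a boundary point $c_i+\rho_i u$ with $\|u\|_K=1$ would lie at distance $\rho_i>\rho_j$ from $c_j$. The claimed inequality then reads $0\le\rho_j-\rho_i$, which holds.
\item If $c_i\ne c_j$, let $u=(c_i-c_j)/\|c_i-c_j\|_K$ be the unit vector pointing from $c_j$ toward $c_i$, and take the point $x=c_i+\rho_i u$. We have $\|x-c_i\|_K=\rho_i$, so $x\in K_i\subseteq K_j$. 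Moreover $x-c_j=(\|c_i-c_j\|_K+\rho_i)\,u$, so homogeneity of the norm gives $\dist_K(c_j,x)=\dist_K(c_j,c_i)+\rho_i$. Since $x\in K_j$, this quantity is at most $\rho_j$, which is exactly the desired inequality.
\end{enumerate}

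The only step requiring any thought is the choice of the extreme point $x$ in the necessity direction. Choosing $x$ along the ray from $c_j$ through $c_i$ makes the triangle inequality tight, because the three points $c_j, c_i, x$ are collinear in the right order. This works for any norm, so strict convexity is not needed for this lemma.
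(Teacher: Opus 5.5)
Your proof is correct and follows essentially the paper's argument: the triangle inequality for sufficiency, and for necessity the boundary point of $K_i$ on the ray from $c_j$ through $c_i$, where the distances add (the paper phrases this step as a contrapositive). Your separate treatment of $c_i=c_j$ is a small improvement, since the paper's ray direction $c_i-c_j$ is undefined in that case.
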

%--------------------------------
\begin{proof}
Suppose that $\dist_K(c_j,c_i) > \rho_j -\rho_i$. Let $x$ be the intersection point of 
$\bd K_i$ with the ray emanating from $c_i$ in direction $c_i-c_j$, i.e., lying on the 
line passing through $c_i$ and $c_j$ but not containing $c_j$. Because of the collinearity 
of $c_i$, $c_j$, and $x$,
\[ 
\dist_K(c_j, x) = \dist_K(c_j,c_i) + \dist_K(c_i,x) = \dist_K(c_j,c_i)+\rho_i > \rho_j.
\]
Hence, $x\not\in K_j$ and thus $K_i \not\subseteq K_j$. Conversely, assume 
that $\dist_K(c_j,c_i) \le \rho_j -\rho_i$.
Let $x\in K_i$ be an arbitrary point. That is, $\dist_K(x,c_i) \le \rho_i$.
By the triangle inequality we have
\begin{align*} 
	\dist_K(x,c_j) - \rho_j & \le \dist_K(x,c_i) + \dist_K(c_i,c_j) - \rho_j \\
		& \le \dist_K(x,c_i) + (\rho_j - \rho_i) - \rho_j \\
		& = \dist_K(x,c_i) - \rho_i \le 0,
\end{align*}
i.e., $x\in K(c_j,\rho_j)$. Since this holds for any point $x\in K_i$, we conclude that
$K(c_i,\rho_i) \subseteq K(c_j,\rho_j)$.
\end{proof}

%---------------------------------------
\begin{lemma} \label{lem:c0in}
	Let $K_i$ and $K_j$ be a pair of non-nested homothets in $\K$ (i.e., neither of them lies inside the other). Then $f_i(c_i) < f_j(c_i)$.
%Assume that the homothets in $\K$ are pairwise non-nested. Then, for any $K_i = K(c_i,\rho_i) \in \K$, 
%$K_i$ is the nearest neighbor of $c_i \in \reals^3$ 
%(under the distance function $\dist_K$), i.e., $f_i$ attains the lower envelope of $\F$ at $c_i$.
\end{lemma}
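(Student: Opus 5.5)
We compute both sides directly from the definition of $f_i$ and $f_j$ and then apply Lemma~\ref{lem:nested}. By definition,
\[
f_i(c_i) = \dist_K(c_i,c_i) - \rho_i = -\rho_i,
\qquad
f_j(c_i) = \dist_K(c_j,c_i) - \rho_j .
\]
So the claimed inequality $f_i(c_i) < f_j(c_i)$ is equivalent to
\[
\dist_K(c_j,c_i) > \rho_j - \rho_i .
\]

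This is exactly the negation of the containment criterion in Lemma~\ref{lem:nested}. That lemma states that $K_i \subseteq K_j$ if and only if $\dist_K(c_j,c_i) \le \rho_j - \rho_i$. Since $K_i$ and $K_j$ are non-nested, $K_i \not\subseteq K_j$, and therefore $\dist_K(c_j,c_i) > \rho_j - \rho_i$, which is what we need.

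Only the direction ``$K_i$ not contained in $K_j$'' is used here. The hypothesis that $K_j \not\subseteq K_i$ is not needed for this particular inequality; it matters only for the symmetric statement at $c_j$.

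There is no real obstacle in this argument. The one point to check is that the degenerate case $\rho_i = 0$, where $K_i$ is a single point, is still covered. Lemma~\ref{lem:nested} handles it: if $c_i \notin K_j$, then $\dist_K(c_j,c_i) > \rho_j$, and the inequality holds.
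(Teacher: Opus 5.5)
Your proof is correct and is essentially the paper's argument. The paper argues by contradiction: assuming $f_j(c_i)\le f_i(c_i)$ gives $\dist_K(c_j,c_i)\le\rho_j-\rho_i$, hence $K_i\subseteq K_j$ by Lemma~\ref{lem:nested}. You state the same equivalence as a direct contrapositive, and your observation that only $K_i\not\subseteq K_j$ is needed holds for the paper's proof as well.
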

%---------------------------------------
\begin{proof}
	Assume to the contrary that $f_j(c_i) \le f_i(c_i)$.
	%$K_j= (c_j,\rho_j)$, for some $j \ne i$, is the nearest neighbor of $c_i$. 
	That is,
\[
\dist_K(c_j,c_i) - \rho_j \le \dist_K(c_i,c_i) - \rho_i = -\rho_i , \qquad\text{or}\qquad
\dist_K(c_j,c_i) \le \rho_j -\rho_i .
\]
But then, by Lemma~\ref{lem:nested}, $K_i \subseteq K_j$, contradicting the assumption that they
are non-nested.
\end{proof}
\fi

For $1 \le i \ne j \le n$, let $B_{ij} = \{ x \in \reals^3 \mid f_i(x) = f_j(x) \}$ 
be the \emph{bisector} of $K_i$ and $K_j$ under the $K$-distance.  Note that if $K_i \subset K_j$ then,
as is easily verified, $f_j(x) < f_i(x)$ for all $x\in\reals^3$ and $B_{ij}$ is undefined.
The following lemma gives a useful property of bisectors.
%Its proof is presented in the full version.
%Its proof is presented in Appendix~\ref{sec:proof}.
%---------------------------------
\begin{lemma}
\label{lem:mono}
	Fix a homothet $K_i\in\K$. Let $\lambda$ be a ray in $\reals^3$ emanating from $c_i$. 
	For any $1 \le j \ne i \le n$ such that $K_i \not\subset K_j$, $\lambda$ intersects $B_{ij}$ in at most one 
	point, say, $\xi$. 
Furthermore, $f_i(x)\le f_j(x)$ for all $x\in c_i\xi$ (where $c_i\xi$ denotes the segment with endpoints $c_i$ and $\xi$) and $f_j(x)>f_i(x)$ for all $x \in \lambda\setminus c_i\xi$.
%(A similar property holds for rays emanating from $c_j$.)
\end{lemma}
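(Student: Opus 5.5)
The plan is to reduce the statement to a one-variable monotonicity argument along the ray. Parametrize $\lambda$ as $x(t)=c_i+tu$, $t\ge 0$, where $u$ is a direction with $\|u\|_K=1$, so that $\dist_K(c_i,x(t))=t$. I will study the difference
\[
g(t) := f_j(x(t))-f_i(x(t)) = \dist_K(c_j,x(t)) - t + (\rho_i-\rho_j).
\]
The points of $\lambda\cap B_{ij}$ are exactly the zeros of $g$. It therefore suffices to show three things: $g(0)>0$; $g$ is nonincreasing; and $g$ is strictly decreasing wherever it could vanish.

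\textbf{Step 1 (positivity at the apex).} We have $g(0)=\dist_K(c_j,c_i)-\rho_j+\rho_i$. Since $K_i\not\subset K_j$, Lemma~\ref{lem:nested} gives $\dist_K(c_j,c_i)>\rho_j-\rho_i$ (modulo the boundary case of equality, which is containment). Hence $g(0)>0$, which is Lemma~\ref{lem:c0in} restated.

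\textbf{Step 2 (monotonicity).} Let $h(t)=\dist_K(c_j,x(t))-t$. For $t<t'$ the triangle inequality gives
\[
\dist_K(c_j,x(t'))\le \dist_K(c_j,x(t))+\dist_K(x(t),x(t')) = \dist_K(c_j,x(t))+(t'-t),
\]
so $h(t')\le h(t)$. By strict convexity of $K$, as recalled in the introduction, equality holds only if $x(t)$ lies on the segment $c_j x(t')$. That forces $c_j$ to lie on the line of $\lambda$, at or behind $x(t)$.

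\textbf{Step 3 (strictness and the degenerate cases).} The main obstacle is that $h$ can be constant on an interval, which happens only when $c_j$ lies on the line supporting $\lambda$.
\begin{itemize}
\item If $c_j=c_i-su$ with $s\ge 0$ (behind the apex), then $h\equiv s$ and $g\equiv g(0)>0$. So there is no intersection and $f_i<f_j$ on all of $\lambda$.
\item If $c_j=c_i+su$ with $s>0$ (on the ray), then $h(t)=s-2t$ is strictly decreasing on $[0,s]$, and $h\equiv -s$ on $[s,\infty)$. Thus $g$ is constant, equal to $\rho_i-\rho_j-s$, beyond $c_j$. This constant is zero only if $\dist_K(c_i,c_j)=\rho_i-\rho_j$, i.e.\ $K_j$ is internally tangent inside $K_i$. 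That is a degenerate (nested) configuration, which I would exclude by the non-nestedness/general-position assumption in force. Otherwise the constant is nonzero, so the zero of $g$, if any, lies in $[0,s)$, where $g$ is strictly decreasing.
\item In all other positions $h$ is strictly decreasing everywhere, so $g$ is too.
\end{itemize}

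\textbf{Conclusion.} In every case $g$ starts positive at $t=0$ and is nonincreasing, and it is strictly decreasing on any interval where it takes the value $0$, except in the excluded degenerate case. Hence $g$ has at most one zero $t_\xi$, giving $\xi=x(t_\xi)$. Moreover $g\ge 0$, i.e.\ $f_i\le f_j$, on $c_i\xi$, and $g<0$ beyond $\xi$. Note that this gives $f_j<f_i$ on $\lambda\setminus c_i\xi$, which is presumably what the last clause of the lemma intends; the displayed inequality there appears to have its sides swapped. If $g$ has no zero, then $f_i<f_j$ along the whole ray.
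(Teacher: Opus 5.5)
Your proof is correct and is essentially the paper's argument. The paper uses the same ingredients: the triangle inequality along $\lambda$, whose equality case under strict convexity forces $c_j$ onto the line of $\lambda$; the fact $f_i(c_i)<f_j(c_i)$ from Lemma~\ref{lem:c0in}; and a direct computation when $c_j$ lies on that line. You have recast these as monotonicity of $g(t)=f_j-f_i$ along the ray. That framing also settles the sign of $f_j-f_i$ beyond $\xi$, which the paper asserts without explicitly ruling out a zero where $f_j-f_i$ touches $0$ but does not change sign.

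You are also right on two points about the statement. The final inequality should read $f_j<f_i$. And the internally tangent configuration $\dist_K(c_i,c_j)=\rho_i-\rho_j$ must be excluded, since there $B_{ij}$ contains a whole subray of $\lambda$. The paper's proof excludes this only implicitly, by invoking Lemma~\ref{lem:c0in} for the pair $(j,i)$, which amounts to also assuming $K_j\not\subseteq K_i$.
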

%---------------------------------

\iffullver
\begin{figure}[htb]
  \begin{center}
	  \includegraphics[scale=0.8]{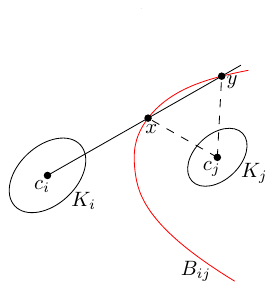} 
  \caption{Illustrating (an impossible scenario in) the proof of Lemma~\ref{lem:mono}.} 
  \label{fig:cc0}
  \end{center}
\end{figure}

\begin{proof}
Suppose to the contrary that $\lambda$ intersects $B_{ij}$ at 
two points $x\ne y$, and assume, without loss of generality, that $x$ is closer to 
$c_i$ than $y$; see Figure~\ref{fig:cc0}. We then have
\begin{align*}
\dist_K(x,c_i) - \rho_i & = \dist_K(x,c_j) - \rho_j \\
\dist_K(y,c_i) - \rho_i & = \dist_K(y,c_j) - \rho_j ,
\end{align*}
and thus
\[
\dist_K(y,c_i) - \dist_K(x,c_i) = \dist_K(y,c_j) - \dist_K(x,c_j) .
\]
But the left-hand side is $\dist_K(x,y)$ (because of the collinearity of $c_i$, $x$, and $y$), so we get
\[
\dist_K(x,y) = \dist_K(y,c_j) - \dist_K(x,c_j) .
\]
This however is impossible in general, because of the strict convexity of $K$, unless $x$, $y$ and $c_j$ 
are also collinear, that is, when $\lambda$ is the ray from $c_i$ through $c_j$. 
Regardless of which ray $\lambda$ is, since $f_i(c_i) < f_j(c_i)$ (by Lemma~\ref{lem:c0in}), 
$f_i(z) \le f_j(z)$ for all $z\in c_i x$ and $f_j (z) < f_i(z)$ for all $z\in\lambda\setminus c_ix$.

On the other hand, if $\lambda$ does pass through $c_i$ and $c_j$, then there is a unique point 
on the line $\bar\lambda$ supporting $\lambda$ that is equidistant from $K_i$ and $K_j$. 
	Indeed, by Lemma~\ref{lem:c0in}, $f_i(c_i) < f_j(c_i)$ and $f_j(c_j) < f_i(c_j)$,
so $B_{ij}$ intersects the segment $c_ic_j$ in at least one point, say $x$. As we 
move along $\bar\lambda$ from  $x$ toward $c_j$, the value of $f_j$ decreases but the value of 
$f_i$ increases until we reach $c_j$, and then, beyond $c_j$, both functions increase at the 
same rate, so $f_j(y)< f_i(y)$ for all points $y$ on $\bar\lambda$ after $x$ and, arguing symmetrically, 
$f_i(y)<f_j(y)$ for all points on $\bar\lambda$ before $x$. This completes the proof of the lemma.
\end{proof}

The following lemma is an easy corollary of Lemma~\ref{lem:mono}.
%---------------------------------------
\begin{lemma}
	\label{lem:star} 
	Let $\tau$ be a three-dimensional cell of $\A_{k}(\F)$
	lying on the graph of a function $f_i\in\F$, and let $\tau^\downarrow$ 
	denote the projection of $\tau$ onto $\reals^3$.  Then $\tau^\downarrow$ is 
	monotone with respect to $c_i$, in the sense that any ray emanating from 
	$c_i$ intersects $\tau^\downarrow$ in a connected (possibly empty or 
	unbounded) segment.
	%not containing $c_i$.
\end{lemma}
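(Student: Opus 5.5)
The plan is to reduce the claim to a monotonicity statement along the ray, using Lemma~\ref{lem:mono}. Fix a ray $\lambda$ emanating from $c_i$ and parametrize it by distance from $c_i$. For a point $x\in\lambda$, let $N(x)$ be the number of indices $j\ne i$ with $f_j(x)<f_i(x)$, i.e., the level of the point $(x,f_i(x))$ on the graph of $f_i$. By definition, the points of $\lambda$ that project from the $k$-level portion of the graph of $f_i$ are exactly the points where $N(x)=k$ (up to the closure taken in the definition of $\A_k(\F)$). The key step is to show that $N$ is nondecreasing as $x$ moves away from $c_i$ along $\lambda$.

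To prove the monotonicity, I will examine each $j\ne i$ separately and split into three cases.
\begin{itemize}
\item If $K_i\subset K_j$, then $f_j<f_i$ everywhere, as noted before Lemma~\ref{lem:mono}. So $j$ contributes $1$ to $N(x)$ for every $x$.
\item If $K_j\subset K_i$, then symmetrically $f_i<f_j$ everywhere, and $j$ contributes $0$ for every $x$.
\item Otherwise $K_i\not\subset K_j$, and Lemma~\ref{lem:mono} applies. Either $\lambda$ misses $B_{ij}$, in which case $f_i<f_j$ along all of $\lambda$ (since $f_i(c_i)<f_j(c_i)$ by Lemma~\ref{lem:c0in}). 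Or $\lambda$ meets $B_{ij}$ in a single point $\xi$, in which case $f_i\le f_j$ on $c_i\xi$ and $f_j<f_i$ beyond $\xi$.
\end{itemize}
In every case, the indicator of the event $f_j(x)<f_i(x)$ is a nondecreasing step function of the position of $x$ along $\lambda$. It jumps at most once, from $0$ to $1$. Summing these indicators over $j$ shows that $N$ is nondecreasing along $\lambda$.

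It follows that $\{x\in\lambda \mid N(x)=k\}$ is the preimage of a single value under a monotone integer-valued function, so it is a connected, possibly empty or unbounded, segment $\sigma$ of $\lambda$. Taking closures does not affect connectivity. It remains to pass from the whole $k$-level portion of $f_i$ to the single cell $\tau$. The lifted segment $\{(x,f_i(x))\mid x\in\sigma\}$ is connected, because $f_i$ is continuous, and it lies entirely in the part of $\A_k(\F)$ on the graph of $f_i$. Hence it is contained in a single connected component, that is, in a single cell of $\A_k(\F)$ on $f_i$. Consequently $\lambda\cap\tau^\downarrow$ is either all of $\sigma$ or empty, and in both cases it is connected, as claimed.

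The main obstacle I expect is the bookkeeping at boundaries. Points of $\sigma$ may lie on bisectors where ties $f_j=f_i$ occur, and the cell $\tau$ is defined as a closure, so adjacent cells of $\A_k(\F)$ on $f_i$ could meet along such tie points. I would handle this by assuming general position, so that these ties form lower-dimensional sets. The connectivity argument is then run on the relatively open set where $N=k$ with no ties, and the closure is taken at the end; the closure of a connected segment is still a connected segment.
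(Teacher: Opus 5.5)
Your proposal is correct and follows the route the paper intends. The paper gives no separate proof and simply calls the lemma an easy corollary of Lemma~\ref{lem:mono}; the content of that corollary is your observation that, moving away from $c_i$, each $f_j$ can only pass from above $f_i$ to below it. Hence the level of the graph of $f_i$ is nondecreasing along every ray from $c_i$, and the level-$k$ set on the ray is an interval.

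One point for your boundary bookkeeping. The same fact handles the tie points: the indicator of $f_j<f_i$ is $0$ at a tie point $\xi\in B_{ij}$ and $1$ immediately beyond it. So a tie point can belong to $\sigma$ only as its endpoint farthest from $c_i$, and removing tie points therefore never disconnects $\sigma$.
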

%---------------------------------------
\fi

We are now ready to describe the decomposition of the cells of $\A_k(\F)$ into elementary 
cells. (Our ultimate goal is to decompose the region below $\A_k(\F)$ into elementary cells,
but we start with this subtask.) For $1 \le i \le n$, let $\A_k^i = \A_k^i(\F)$ denote the subset of cells 
of $\A_k(\F)$ that lie on the graph of $f_i$, and let $\M_k^i$ denote their projections (which are the cells 
of $\M_k$ for which $K_i$ is the $k$-th nearest neighbor).
We describe the algorithm for decomposing the cells of $\M_k^i$ (or of $\A_k^i$).
Lemma~\ref{lem:mono} implies that, for any $j\ne i$, $B_{ij}$ can be viewed as the graph of a function 
$g_{ij}: \sphere^2 \rightarrow \reals_{\ge 0}$ in spherical coordinates with $c_i$ 
\ifsocgproc
as the origin.
	\else
as the origin.\footnote{%
  To make everything semi-algebraic, we should reparameterize the spherical coordinates
  $(\theta,\varphi)$ using some (standard) algebraic representation. Nevertheless, for 
  convenience, we stick to the spherical coordinate notation.}
\fi
That is, for a given $u \in \sphere^2$, let $\lambda (u)$ be the ray emanating from $c_i$ 
in direction $u$, and let $x_{ij}(u)$ be the intersection point of $\lambda(u)$ with the bisector $B_{ij}$
if such an intersection point exists; otherwise $x_{ij}(u)$ is undefined. We set 
$g_{ij}(u) = f_j(x_{ij}(u))$ if $x_{ij}(u)$ is defined and $+\infty$ otherwise. 
Let $G^{(i)} = \{ g_{ij} \mid 1 \le j \ne i \le n\}$. We define the \emph{level} of 
a point $x\in\reals^3$ in (the 3D arrangement, within $\sphere^2\times\reals$) 
$\A(G^{(i)})$ to be the number of functions 
in $G^{(i)}$ whose graphs intersect the (relatively open) segment $c_i x$.
The following lemma will be useful in analyzing the complexity of the decomposition of $\A_k^i$.

%\pankaj{Figure needs to be changed. One cannot have a function lying below $c_0$.}

%-----------------------------------------
\begin{lemma}
For a parameter $k<n$, let $\cell$ be a 3D cell of $\A_k^i$.
Then $\cell^\downarrow$ is a cell of level at most $k$ in $\A(G^{(i)})$.
\end{lemma}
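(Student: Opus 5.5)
Fix a point $x$ in the interior of $\cell^\downarrow$, and write $p=(x,f_i(x))$ for the corresponding point of $\cell$. The plan is to bound directly the number of functions $g_{ij}$ whose graphs cross the open segment $c_ix$, and to show that every such crossing comes from a function $f_j$ lying strictly below $f_i$ at $x$. Since $p$ lies on the $k$-level of $\A(\F)$, at most $k$ such $f_j$ exist, and this gives the claim. Because the level in $\A(G^{(i)})$ is constant on cells, it suffices to verify this at a generic interior point $x$ of $\cell^\downarrow$.

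\textbf{Step 1.} Fix $j\ne i$ and suppose the graph of $g_{ij}$ meets the open segment $c_ix$. By construction of $g_{ij}$, this means that the ray $\lambda$ from $c_i$ through $x$ meets $B_{ij}$ at a point $\xi=x_{ij}(u)$ lying strictly between $c_i$ and $x$. In particular $B_{ij}$ is defined, so $K_i\not\subset K_j$. Moreover, if $K_j\subset K_i$ then $f_i<f_j$ everywhere and $B_{ij}$ would be empty, so the two homothets are non-nested and Lemma~\ref{lem:mono} applies.

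\textbf{Step 2.} Lemma~\ref{lem:mono} gives that $\xi$ is the unique crossing point on $\lambda$, that $f_i\le f_j$ on the segment $c_i\xi$, and that $f_j<f_i$ beyond $\xi$. (The lemma's text writes the last inequality the other way, but its proof establishes $f_j(z)<f_i(z)$ for $z\in\lambda\setminus c_i\xi$, and that is the version I use.) Since $x$ lies beyond $\xi$ on $\lambda$, we get $f_j(x)<f_i(x)$. Hence the graph of $f_j$ passes strictly below $p$.

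\textbf{Step 3.} The map $j\mapsto f_j$ is injective, so the number of $g_{ij}$ crossing $c_ix$ is at most the number of functions lying below $p$. That number is at most $k$, since $p$ is on the $k$-level. Therefore the level of $x$ in $\A(G^{(i)})$ is at most $k$. As this holds at every point of $\cell^\downarrow$, every cell of $\A(G^{(i)})$ meeting the interior of $\cell^\downarrow$ has level at most $k$.

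It remains to identify $\cell^\downarrow$ with a single cell of $\A(G^{(i)})$, or at least a union of such cells. By Lemma~\ref{lem:star}, $\cell^\downarrow$ is star-shaped along rays from $c_i$. Its boundary consists of portions of bisectors $B_{ij}$ where the $k$-level switches from $f_i$ to another function, and these are exactly portions of graphs of $g_{ij}$. Its interior meets no graph of $g_{ij}$, since a crossing there would mean $f_i=f_j$ in the interior of $\cell$. So $\cell^\downarrow$ is a face of $\A(G^{(i)})$, and hence a cell of level at most $k$.

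The main obstacle is this last identification, together with the degenerate case where $\lambda$ passes through $c_j$. For the degenerate case I would rely on the collinear branch of the proof of Lemma~\ref{lem:mono}, which still gives a unique crossing with the same sign pattern. Boundary points of $\cell^\downarrow$ are then handled by taking closures.
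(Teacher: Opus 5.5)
Your proof is correct and is essentially the paper's argument. For $x\in\cell^\downarrow$, each $g_{ij}$ whose graph crosses the segment $c_ix$ forces $f_j(x)<f_i(x)$ by Lemma~\ref{lem:mono}, so $f_j$ passes below $\cell$, and there are at most $k$ such functions. You are also right that the final inequality in the statement of Lemma~\ref{lem:mono} is a typo: its proof, and the paper's own use of it here, give $f_j<f_i$ beyond $\xi$. Your added remarks on non-nestedness and on $\cell^\downarrow$ being a cell of $\A(G^{(i)})$ just spell out what the paper calls obvious.
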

%-----------------------------------------

\begin{figure}[htb]
  \begin{center}
	  \includegraphics[scale=0.8]{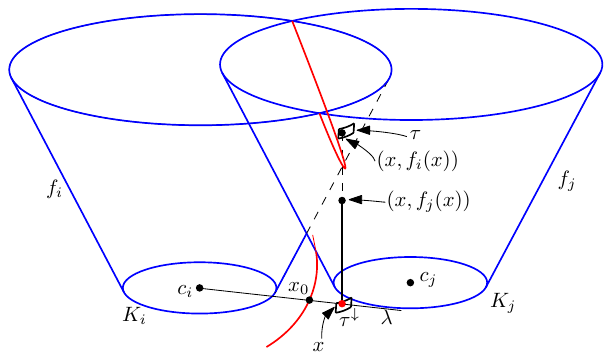} 
	  \caption{A bisector crossing $\lambda$. (The figure is drawn in the 3D $c$-space.)}
  \label{fig:before-after}
  \end{center}
\end{figure}

\begin{proof}
It is obvious from the definition of $G^{(i)}$ 
that $\cell^\downarrow$ is a cell in $\A(G^{(i)})$.
We now argue that its level in $\A(G^{(i)})$ is at most $k$.
Choose a point $x\in\cell^\downarrow$. Let $\lambda$ be the ray emanating from $c_i$
	and passing through $x$. Let $g_{ij}$ be a function whose graph 
intersects the segment $c_ix$, say, at a point $x_0 \in B_{ij}$. (See Fig.~\ref{fig:before-after}.)
By Lemma~\ref{lem:mono}, $f_i(y) < f_j(y)$ for all points $y$ on $\lambda$ preceding $x_0$,
and $f_j (y) < f_i (y)$ for all points on $\lambda$ beyond $x_0$. 
In particular, $f_i(x) > f_j(x)$. Since $\cell$ is a cell of $\A(\F)$,
$f_j$ does not intersect $\cell$, so $f_j$ lies below $\cell$ in 
$\reals^4$ (in the $x_4$-, or rather $\rho$-direction). Furthermore, 
$\cell \in \A_k(\F)$, so there are  exactly
$k$ functions of $\F$ whose graphs lie below $\cell$, implying that the level of $\cell^\downarrow$ 
in $\A(G^{(i)})$ is at most $k$, as claimed.
\end{proof}

\begin{figure}[htb]
  \begin{center}
	  \includegraphics[scale=0.8]{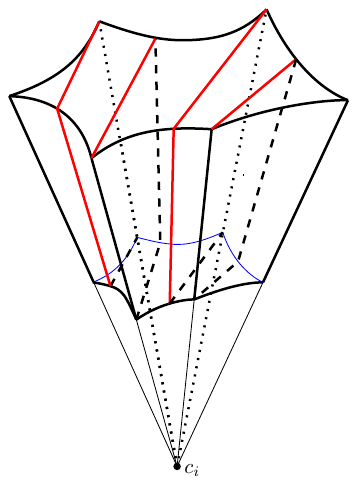} 
	  \caption{Vertical decomposition of a cell in the spherical coordinate system.}
  \label{fig:spherical-VD}
  \end{center}
\end{figure}

We decompose the cells $\cell$ of $\A_k^{(i)}(\F)$ by constructing ``vertical decompositions'' 
of the corresponding 3D projections $\cell^\downarrow$ in the spherical coordinate system
around $c_i$. As for standard vertical decompositions in 3D, we proceed in two stages. 
The first stage erects a wall from each edge of $\cell$ as follows.
Fix an edge $e$ of $\cell^\downarrow$. For a point $x\in e$, 
let $\lambda(x)$ be the ray emanating from $c_i$ and passing through $x$, and let 
$h(x)= \lambda(x)\cap\cell^\downarrow$. By Lemma~\ref{lem:mono}, $h(x)$ is a (connected) segment that
touches $e$ at one of its endpoints --- it is either the top endpoint,
for all points on $e$, or the bottom endpoint (with respect to distances from $c_i$). 
We erect the wall $\hat e = \bigcup_{x\in e} h(x)$ in $\cell^\downarrow$; 
parts of the wall may extend to infinity when $\cell^\downarrow$ is unbounded and $e$ is 
a bottom edge. We repeat this step for all edges of $\cell^\downarrow$.
These walls decompose $\cell^\downarrow$ into \emph{frustums} (truncated cones), each of 
which, denoted by $\Delta$, has a unique pair of \emph{front} and \emph{back} faces; the
other faces  of $\Delta$ lie on the created walls. (See Fig.~\ref{fig:spherical-VD}.) 
The projections of the front and back 
faces of $\Delta$ on $\sphere^2$ (with $c_i$ as the origin) are identical, and we denote 
this identical projection by $\Delta^\downarrow$. The complexity of $\Delta^\downarrow$ 
may be arbitrarily large (a typical situation in the first stage of vertical decompositions
in general; see, e.g., \cite{CEGS}). The subcell $\Delta$ itself is of the form 
\ifsocgproc
$\Delta = \{ (u,r) \mid u \in \Delta^\downarrow,\, r\in [g^-(u),g^+(u)] \}$,
\else
\[ 
\Delta = \{ (u,r) \mid u \in \Delta^\downarrow,\, r\in [g^-(u),g^+(u)] \} ,
\]
\fi
where $g^-, g^+$ are the functions of $G^{(i)}$ whose graphs contain the front and back faces of $\Delta$,
respectively. 

The second stage decomposes $\Delta$ into constant-complexity frustums, which we refer to as 
\emph{pseudo-cones}. We decompose $\Delta^\downarrow$ into spherical pseudo-trapezoids,
by drawing portions of meridians within $\Delta^\downarrow$ from each vertex and meridian-tangency 
point. We then lift each pseudo-trapezoid $\sigma^\downarrow$ to form the pseudo-cone 
$\sigma = \{ (u,r) \mid u \in \sigma^\downarrow, \, r \in [g^-(u),g^+(u)] \}$ in $\reals^3$.

Repeating this step for all frustums created in the first stage, we obtain
a decomposition of $\cell^\downarrow$ into pseudo-cones,
each being a semi-algebraic set of constant complexity and bounded by up to six facets
(compare with standard vertical decompositions in 3D~\cite{CEGS}). 
We refer to these pseudo-cones as \emph{elementary cells}. We note that each pseudo-cone 
$\phi$ is \emph{defined} by a subset $D_\phi$ of at most seven functions of $\F$ (the function $f_i$
and up to six additional functions that define the functions $g_{ij}$ that form the frustum), 
in the sense that $\phi$ appears in the vertical decomposition of a 3D cell of $\A(D_\phi)$.
By repeating this step for all cells of $\M_k^i$ and for all homothets $K_i$ of $\K$, 
we obtain a decomposition of $\M_k = \M_k(\K)$ into elementary cells. Finally, we 
lift each elementary cell of this decomposition vertically to $\A_k(\F)$ in a straightforward manner,
to obtain a corresponding decomposition of $\A_k(\F)$.

Kaplan~\etal~\cite[Lemma~6.2]{KMRSS} have shown that the complexity of the vertical decomposition of 
the $(\le k)$-level $\Ak{k}(\G)$ of a set of $\G$ of $m$ bivariate semi-algebraic functions of constant 
complexity is $O(k^2\psi(n/k)\lambda_s(k))$, where $\psi(r)$ is the maximum complexity of 
the lower envelope of a subset of $\G$ of size at most $r$, $\lambda_s(t)$ is the maximum 
length of Davenport-Schnizel sequences of order $s$ composed of $t$ symbols~\cite{SA}, 
and $s$ is a constant depending on the complexity of the functions in $\G$.
By applying the argument in \cite{KMRSS} to our spherical coordinate system and using
the worst-case bound $\psi(m) = O^*(m^2)$ on the complexity of the lower envelope
of $m$ constant-complexity bivariate functions~\cite{SA}, 
we conclude that the cells of $\M_k^i$ can be decomposed into $O^*(n^2k)$ elementary cells. 
Hence, $\A_k(\F)$, or $\M_k(\K)$, can be decomposed into $O^*(n^3k)$ elementary cells. 
By adapting the randomized incremental algorithm described in~\cite{KMRSS}, $\M_k^i$, for each 
$i$, can be computed in $O^*(n^2k)$ expected time. We thus obtain:

%---------------------------------------------
\begin{theorem} \label{thm:decomp}
Let $K$ be a compact, centrally-symmetric strictly convex semi-algebraic set in $\reals^3$ of constant 
complexity. Let $\K$ be a set of $n$ homothetic copies of $K$. The cells of $\M_k(\K)$, 
and the cells of the $k$-level in the arrangement of their $K$-distance 
functions, can be decomposed into $O^*(n^3k)$ elementary cells, in $O^*(n^3k)$ expected time.
\end{theorem}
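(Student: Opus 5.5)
The plan is to fix $i$, bound the decomposition of the cells of $\M_k^i$ by $O^*(n^2k)$, and then sum over the $n$ choices of $i$.

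\emph{Step 1: reduction to a $(\le k)$-level of surfaces.} Fix $i$. By Lemma~\ref{lem:mono}, each bisector $B_{ij}$, viewed in spherical coordinates around $c_i$, is the graph of a partial function $g_{ij}$ over $\sphere^2$. Each $g_{ij}$ is semi-algebraic of constant complexity, once the coordinates are reparameterized algebraically. By the preceding lemma, every 3D cell $\cell$ of $\A_k^i$ projects to a cell $\cell^\downarrow$ of level at most $k$ in $\A(G^{(i)})$, where level counts the graphs crossing the segment $c_i x$. Hence the cells of $\M_k^i$ form a subfamily of the cells of $\A_{\le k}(G^{(i)})$. Moreover, the spherical vertical decomposition described above is exactly the vertical decomposition of this bivariate arrangement, with ``vertical'' meaning radial from $c_i$. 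So it suffices to bound the complexity of the vertical decomposition of $\A_{\le k}(G^{(i)})$, an arrangement of $n-1$ partially defined bivariate functions.

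\emph{Step 2: the counting argument.} I would apply the Clarkson--Shor style argument of Kaplan~\etal~\cite[Lemma~6.2]{KMRSS}.
\begin{enumerate}[(i)]
\item Each pseudo-cone is defined by at most six functions of $G^{(i)}$. It appears in the decomposition of $\A_{\le k}$ if and only if at most $k$ other graphs cross it radially below.
\item Take a random sample $R$ of size about $n/k$. The expected number of pseudo-cones of the lower envelope of $R$ is $O^*((n/k)^2)$, since vertical decompositions of bivariate lower envelopes have near-quadratic size~\cite{SA}.
\item By Clarkson--Shor, the number of pseudo-cones at level $\le k$ is $O(k^{6}\cdot(\text{expected count in a sample of size } n/k))$. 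This is too weak, so I refine it as KMRSS do.
\item The refinement counts, for each $(\le k)$-level cell, the edges of the 2D projected minimization diagram and the $\lambda_s(k)$-bounded number of meridian breakpoints along each edge. This yields $O(k^2\psi(n/k)\lambda_s(k)) = O^*(k^2 (n/k)^2 k) = O^*(n^2k)$.
\end{enumerate}
The same bound, with the same argument, covers the first-stage walls. Those walls are erected only from edges of cells of $\M_k^i$, which have level $\le k$.

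\emph{Step 3: summation and lifting.} Summing $O^*(n^2k)$ over all $i$ gives $O^*(n^3k)$ elementary cells for $\M_k(\K)$. Lifting each cell of $\M_k^i$ to the graph of $f_i$ gives the decomposition of $\A_k(\F)$, with the same count. For the algorithm, I would run the randomized incremental construction of the $(\le k)$-level vertical decomposition from~\cite{KMRSS} in the spherical frame around each $c_i$. It inserts the $g_{ij}$ in random order and maintains conflict lists, taking $O^*(n^2k)$ expected time per $i$. I would then discard the cells of level $\le k$ that are not cells of $\A_k^i$. This is checked via the conflict lists, since a cell belongs to $\A_k^i$ exactly when exactly $k$ functions of $\F$ lie below it.

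\emph{Main obstacle.} The main obstacle is justifying that the KMRSS analysis transfers to the spherical setting with partial functions $g_{ij}$ (value $+\infty$ where undefined) over the compact domain $\sphere^2$. In particular, I need that the conflict definition of a pseudo-cone is local, meaning determined by its $O(1)$ defining functions. I would handle this by covering $\sphere^2$ with $O(1)$ charts, reducing to graphs over planar domains with constant-complexity boundaries. Lemma~\ref{lem:mono} is what makes each bisector a single-valued graph, so the standard machinery applies.
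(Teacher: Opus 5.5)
Your proposal is correct and follows the paper's proof essentially step for step. For each fixed $i$, you use Lemma~\ref{lem:mono} and the level lemma to place the cells of $\M_k^i$ inside the $(\le k)$-level of the radial functions $G^{(i)}$, apply \cite[Lemma~6.2]{KMRSS} with $\psi(m)=O^*(m^2)$ to get $O^*(n^2k)$ elementary cells, then sum over $i$, lift to the graph of $f_i$, and adapt the randomized incremental construction of \cite{KMRSS} for the running time. Your added care about partial functions and parameterizing $\sphere^2$ spells out what the paper compresses into ``applying the argument in \cite{KMRSS} to our spherical coordinate system.''
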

%---------------------------------------------

%-----------------------------------
\section{Vertical Shallow Cuttings} \label{sec:shallow}

% \newtext{
A key notion that we need for constructing the dynamic NN data structures of Section~\ref{sec:ds}
is that of a vertical shallow cutting of $\A(\F)$, 
as studied in \cite{Chan:le,CT16,KMRSS} (for simpler scenarios). For a parameter $k$, a 
\emph{vertical $k$-shallow cutting} of $\F$ is a collection of pairwise openly disjoint 
semi-unbounded \emph{pseudo-prisms} (prisms for short), where each prism consists of all points 
that lie vertically below some pseudo-cone, of the decomposition (of Section~\ref{sec:vd}) 
that covers $\Ak{k} (\F)$,
and is thus a semi-algebraic set of constant complexity. 
Furthermore, the ceilings of these prisms collectively form an $x_4$-monotone surface that lies 
between $\A_k(\F)$ and $\A_{2k}(\F)$, and each prism is crossed by the graphs of at most 
$O(k)$ functions of $\F$. The \emph{conflict list} of a prism $\sigma$ is the set of 
functions whose graphs cross $\sigma$.

Following the ideas in~\cite{KMRSS}, for a parameter $t \in [1,n]$,
we construct a vertical $(n/t)$-shallow cutting as follows.
We set two parameters: $r=\beta t \log t$, where $\beta$ is a sufficiently large constant, independent
of $t$, and $q=\beta \log t = r/t$. We choose a random subset $\NN \subseteq \F$ consisting of $r$ functions,
construct $\A_q(\NN)$, and compute the decomposition of $\A_q(\NN)$ into a family $\Xi$ of pseudo-cones 
as described in Section~\ref{sec:vd}. For each pseudo-cone $\cell \in \Xi$, we associate a label 
$\varphi(\cell)$ which is the function of $\NN$ whose graph contains $\cell$. Finally, we erect a 
semi-unbounded prism $\cell^\uparrow$ from each pseudo-cone $\cell$ of $\Xi$, given by
\ifsocgproc
$\cell^\uparrow = \bigcup \{ \{c\} \times(-\infty, \rho] \mid (c,\rho) \in \cell\}$.
\else
\[
\cell^\uparrow = \bigcup \{ \{c\} \times(-\infty, \rho] \mid (c,\rho) \in \cell\}.
\] 
\fi
Let $\Xi^\uparrow$ be the resulting set of pseudo-prisms. By Theorem~\ref{thm:decomp}, with large probability,
\ifsocgproc
$|\Xi| = O^*(r^3q) =  O^*(t^3)$.
\else
\[
|\Xi| = O^*(r^3q) = O^*(t^3\log^4t) = O^*(t^3).
\]
\fi
We next show that $\Xi$ is a vertical $k$-shallow cutting of $\F$, where $k = n/t$, with high probability.

%------------------------------------------------
\ifsocgproc 
\subparagraph*{Range spaces and shallow $\eps$-nets.}
\else
\paragraph{Range spaces and shallow $\eps$-nets.}
\fi

Let $\Sigma=(\XX,\R)$ be a (finite) range space, where $\XX$ is a set of objects and $\R \subseteq 2^\XX$ a set of ranges.
Let $0 < \eps < 1$ be a given parameter. A subset $\NN \subseteq \XX$ is called 
a \emph{shallow $\eps$-net} of $\Sigma$ if it satisfies the following two properties
for every range $R \in \R$ and for any parameter $\zeta \ge 0$:

\medskip
\noindent{\bf (i)}
If $|R\cap \NN| \le \zeta \log \frac{1}{\eps}$ then $|R\cap \XX| \le \alpha(\zeta + 1)\eps|\XX|$, and

\medskip
\noindent{\bf (ii)}
If $|R\cap \XX| \le \zeta\eps|\XX|$ then $|R\cap \NN| \le \alpha(\zeta + 1) \log \frac{1}{\eps}$. 

\medskip
Here $\alpha$ is a suitable constant.
Note the difference between shallow and standard $\eps$-nets: Property (i) (with $\zeta = 0$) implies that
a shallow $\eps$-net is also a standard $\eps$-net (possibly with a recalibration of $\eps$). 
Property (ii) has no parallel in the case of standard $\eps$-nets --- there is no guarantee how 
a standard net interacts with small ranges (of the entire set $\XX$). The following result by 
Sharir and Shaul~\cite{ShSh} is a generalization of the result on standard $\eps$-nets~\cite{HW87}.

%-------------------------------------------------------------------
\begin{lemma}[Theorem 2.2 of Sharir and Shaul~\protect{\cite{ShSh}}]
	\label{lem:shallow-net}
Let $\Sigma=(\XX,\R)$ be a range space with VC-dimension $d$.
With a suitable choice of the constant of proportionality, a random sample $\NN \subseteq \XX$ of
${\displaystyle O\left(\frac{d}{\eps}\log\frac{1}{\eps} + \log\frac{1}{\delta}\right)}$
%\[
%O\left( \frac{1}{\eps} \left( \delta \log \frac{1}{\eps} + \log \frac{1}{\pi} \right) \right)
%\]
is a shallow $\eps$-net with probability at least $1 - \delta$.
\end{lemma}
%-------------------------------------------------------------------

\ifsocgproc 
\subparagraph*{$\Xi$ is a vertical shallow cutting.}
\else
\paragraph{$\Xi$ is a vertical shallow cutting.}
\fi
We apply the above result to the range space $\Sigma=(\F, \R)$, % where $\R$ consists of ranges,
so that each range in $\R$ is the subset of the surfaces of $\F$ that cross some region in $\reals^4$,
taken from some family $\Gamma$ of regions. Concretely, this includes the following families $\Gamma$:
% of several families $\Gamma$ of geometric objects in $\reals^4$. Each range in $\R$ is the 
% subset of those objects in such a family that intersect an object in $\Gamma$, i.e.,
% $\R = \{ \{ f_i \in \F \mid \gamma \cap f_i \ne\emptyset\} \mid \gamma \in \Gamma\}$.
(i) the set of pseudo-cones generated by the decomposition of a set of at most seven functions of $\F$ (recall that a pseudo-cone is defined by at most seven functions), 
(ii) the set of pseudo-prisms erected on these pseudo-cones, (iii) the set of edges in an arrangement 
of five functions of $\F$ (an edge in the arrangement of a subset of $\F$ 
is defined by at most five functions),
and (iv) the set of rays in the positive $x_4$-direction.
Using standard arguments, it can be shown that the VC-dimension of $\Sigma$ is finite. Therefore, by 
applying Lemma~\ref{lem:shallow-net} with $\eps=1/t$ and $\delta=1/3$ and choosing the constant of proportionality appropriately,
the random subset $\R$ is a $(1/t)$-shallow net of $\Sigma$ with probability at least $2/3$,
so we assume that $\NN$ is indeed a $(1/t)$-shallow net of $\Sigma$.
Recall that $q=\beta\log t$. Assuming $\beta\ge 2\alpha$, where $\alpha$ is the constant in the definition of shallow nets,
the converse of property (ii) of shallow nets, with $\zeta=1$, implies 
that the level of any point 
$p$ on $\A_q(\NN)$ with respect to $\F$ is  at least $n/t$.
Also, by property (i), the level of $p$  is at most $O(n/t)$. 
Finally, the shallow net property also implies that the size of the 
conflict-list of any prism in $\Xi^\uparrow$ is $O(n/t)$.
\iffullver
By testing each prism of $\Xi^\uparrow$ with every function of $\F$, 
the conflict list of all prisms can be computed in $O^*(nt^3)$ time.
\else
The conflict list of all prisms can be computed in $O^*(nt^3)$ time.
\fi
Hence, we obtain:

%---------------------------------------
\begin{theorem}
	\label{thm:cutting}
Let $K$ be a compact, centrally-symmetric strictly convex semi-algebraic set
in $\reals^3$ of constant complexity.  Let $\K$ be a set of $n$ homothetic 
copies of $K$. For any parameter $t \in [1,n-1]$, there exists a vertical 
$(n/t)$-shallow cutting $\Xi$ of the arrangement of the distance functions of $\K$ of size $O^*(t^3)$.
The cutting $\Xi$, along with the conflict list of each of its prisms, can be computed in $O^*(nt^3)$ expected time.
\end{theorem}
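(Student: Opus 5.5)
The plan is to make rigorous the sketch preceding the statement. I will take a random sample $\NN\subseteq\F$ of size $r=\beta t\log t$, set $q=\beta\log t$, and build $\A_q(\NN)$. I will decompose it into the family $\Xi$ of pseudo-cones using Theorem~\ref{thm:decomp}, and erect the prisms $\cell^\uparrow$. Theorem~\ref{thm:decomp}, applied to the $r$ homothets of $\NN$ with $k=q$, gives $|\Xi|=O^*(r^3q)$, which is $O^*(t^3)$ since $\log^4 t$ factors are absorbed by the $O^*$ notation. It also gives an expected construction time of $O^*(t^3)$. Because the decomposition is the lift of a decomposition of $\M_q(\NN)$, the prisms are pairwise openly disjoint, and their ceilings form the $x_4$-monotone surface $\A_q(\NN)$.

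Next I will verify the shallow-cutting properties. I will define the range space $(\F,\R)$ whose ranges are induced by pseudo-cones and pseudo-prisms defined by at most seven functions, by arrangement edges of at most five functions, and by upward vertical rays. Each of these is a constant-complexity semi-algebraic family with a constant number of parameters, so the VC-dimension is bounded by the standard Milnor--Thom / Warren argument. Lemma~\ref{lem:shallow-net} with $\eps=1/t$ then makes $\NN$ a $(1/t)$-shallow net with constant probability; I will resample if the check fails, which gives the expected bound.

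The coverage property comes from the ray ranges. Consider a point $p$ on $\A_q(\NN)$. The functions below $p$ are exactly those met by the downward ray from $p$; equivalently, I can phrase this with upward rays from below, or add downward rays to the family. At most $q=\beta\log t$ of these functions belong to $\NN$. By property (i), $p$ therefore has level $O(n/t)$ in $\A(\F)$. Conversely, if $p$ had level less than $n/t$, property (ii) with $\zeta=1$ would give at most $2\alpha\log t<q$ functions of $\NN$ below $p$, a contradiction. So the ceiling surface lies between $\A_{n/t}(\F)$ and $\A_{O(n/t)}(\F)$, and every point of $\Ak{n/t}(\F)$ lies below it, i.e.\ inside some prism.

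The conflict-list bound is the main obstacle. A prism $\cell^\uparrow$ is a range defined by at most seven functions of $\NN$. Every point of its ceiling has level $q$ in $\NN$, so at most $q$ functions of $\NN$ cross it. The difficulty is that property (i) needs the count $|R\cap\NN|\le \zeta\log t$ for the prism range itself, and a function of $\NN$ could cross the prism without passing below its ceiling everywhere. I would handle this directly. Any function crossing $\cell^\uparrow$ lies below some point of its ceiling $\cell$, so it passes below one of the constant number of vertices of $\cell$, or it crosses an edge of $\cell$, or it lies below $\cell$ at an interior point. The interior case is excluded for $\NN$-functions except those counted at vertices or edges, because the definers of $\cell$ are not crossed internally. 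Each vertex and edge range therefore meets $O(q)$ functions of $\NN$, and property (i) bounds its $\F$-range by $O(n/t)$. Summing over the $O(1)$ boundary features, with a direct argument for the interior case, yields conflict lists of size $O(n/t)$; if that argument fails, the fallback is to apply property (i) to the prism range itself, bounding its $\NN$-count by the $q$ functions below the ceiling plus the at most seven definers of $\cell$. Finally, I will compute the conflict lists by brute force, testing each of the $O^*(t^3)$ prisms against all $n$ functions in $O(1)$ time each, for $O^*(nt^3)$ total.
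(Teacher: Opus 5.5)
Your proposal is correct and is essentially the paper's proof: same sample ($r=\beta t\log t$, $q=\beta\log t$), same range space, property (ii) with $\zeta=1$ for the lower bound on the level of the ceiling and property (i) for the upper bound, and brute-force conflict lists; your ``fallback'' for the conflict lists is exactly how the paper concludes, namely applying property (i) directly to the prism range $\cell^\uparrow$, which belongs to family (ii) of the range space and meets at most $q+7$ functions of $\NN$. You should lead with that argument rather than the vertex/edge decomposition, which is unnecessary and, as stated, incomplete: for curved graphs a function of $\F\setminus\NN$ can dip below the interior of $\cell$ without passing below any vertex or crossing any edge, so vertex and edge ranges alone cannot bound the conflict list.
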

%---------------------------------------

%------------------------------
\section{Dynamic Data Structures for Proximity Queries} \label{sec:ds}

In this section we describe a dynamic data structure for answering \emph{intersection-detection}
queries amid $\K$. Namely, for a query homothet $K_0 \in \Kspace$, determine whether $K_0$ intersects
some homothet of $\K$, and, if the answer is yes, return such a homothet.
In addition to answering queries, the data structure can be updated efficiently
as a homothet $K_0\in\Kspace$ is inserted into $\K$ or deleted from $\K$. 
The data structure can be extended to answering \emph{intersection-reporting} queries, namely,
reporting all homothets of $\K$ that intersect $K_0$, at an additional cost of $O(k)$. 
\iffullver
Using the parametric-search technique
(see, e.g., \cite{AM93,AM95}), a nearest-neighbor query under the $K$-distance can be reduced 
to answering $O(\log n)$ intersection-detection queries, so the cost increases by only a logarithmic factor.
\fi

As already noted, the insertion of a new homothet into $\K$ can be handled using the standard 
dynamization technique by Bentley and Saxe~\cite{BS80} (see also~\cite{AM95}).
That is, assume that we have an intersection-detection data structure that can handle deletions,
which can be constructed in $P(n)$ time, so that a query can be answered in $Q(n)$ time and an object can be deleted in $D(n)$ time.
Then the amortized insertion time is $O((P(n)/n)\log n)$, the deletion time remains $D(n)$, 
and the overall query time is $O(Q(n))$, assuming that $Q(n) = \Omega(n^\eps)$.
\iffullver
See \cite{BS80} for details.
\fi

We first describe a data structure that answers a query in $O^*(1)$ time using
$O^*(n^3)$ space and handles a deletion in $O^*(n^2)$ amortized expected time, and then describe a linear-size data 
structure that answers a query in $O^*(n^{2/3})$ time and handles a deletion in $O(\log^2 n)$ amortized expected time. 
\iffullver
Interpolating between these two setups, using standard techniques~\cite{AAEKS}, 
for a storage parameter $s\in [n,n^3]$, 
we can construct a structure of size $O^*(s)$, in $O^*(s)$ expected time,
that answers a query in $O^*(n/s^{1/3})$;
the amortized expected update time is $O^*(s/n)$.
\fi

%------------------------------------------
\subsection{Fast query-time data structure}
\label{subsec:log}

We follow the same approach of Agarwal and Matou\v{s}ek for performing dynamic halfspace range 
reporting~\cite{AM95} (slightly improved but more involved approaches were proposed 
in~\cite{Chan:le,KMRSS}, but the one in~\cite{AM95} will do for our setting).  

\ifsocgproc 
\subparagraph*{Data structure.}
\else
\paragraph{Data structure.}
\fi
Let $\F$ be the collection of distance functions corresponding to the elements of $\K$.
Roughly speaking, we build a tree data structure on $\F$ using vertical shallow cuttings. 
That is, we start with $\F$, choose a parameter $t>1$, compute an $(n/t)$-vertical shallow 
cutting $\Xi$ of $\F$ of size $O^*(t^3)$, in $O^*(nt^3)$ time, 
create a child for each prism 
$\cell$ of $\Xi$, recursively construct the data structure for the conflict list 
$\F_\cell$ of every $\cell\in\Xi$, and attach it as the subtree of the child corresponding 
to $\cell$. The recursion proceeds until we reach cells $\cell$ with $|\F_\cell| = O(1)$,
in which case we just store $\F_\cell$ at $\cell$, as a list, say.

A query, with a copy $K_0 = (c_0,\rho_0)$ of $K$, is answered in a standard way, by
traversing the tree in a top-down manner. At each node (prism) $\cell$ that we visit,
we know that $(c_0,\rho_0)\in\cell$. We check whether $(c_0,\rho_0)$ lies in any
child prism of $\cell$, and, if so, recurse at that child. Otherwise, we locate the
child prism $\cell'$ for which $(c_0,\rho_0)$ lies above $\cell'$, and report any
element whose function belongs to the conflict list 
\iffullver
$\F_{\tau'}$ (or report all of them
in the reporting version).
\else
$\F_{\tau'}$.
\fi

However, this simple recursive approach runs into a technical difficulty
when we delete a function from $\F$.
When deleting a function from the root, the deletion has to be propagated to some of 
its children, to their children, and so on. On average, each function appears in 
the conflict lists of $O^*(t^2)$ prisms of $\Xi$. When such a ``good'' function is deleted, 
the deletion is propagated to only $O^*(t^2)$ descendants, leading to $O^*(n^2)$ overall deletion 
time, as can easily be verified. However, some of the functions (in fact, up to $n/t$ of them) may appear 
in the conflict lists of $\Omega(t^3)$ prisms of $\Xi$.
%(or at any rate $\omega(t^2)$ of them). 
If these ``bad'' functions are the first $n/t$ functions to be deleted and then the data structure 
is reconstructed, the overall cost of the deletion operations will be too high. 
As in~\cite{AM95,KMRSS}, we circumvent this difficulty by maintaining a partition of $\F$ into a few 
subsets and constructing a cutting for each subset that is good for all functions in that subset 
(i.e., each function in the subset appears in only $O^*(t^2)$ conflict lists). 

We now describe the data structure in detail. We fix a sufficiently large constant $t>0$.
The data structure is periodically reconstructed after performing some deletions. We use 
$m$ to denote the size of $\F$ when the data structure was previously reconstructed and $n$ 
to denote its current size. We reconstruct the data structure after deleting $m/2t$ functions 
from $\F$. Thus $n \ge m(1-1/2t)$. Let $P(r)$ denote the maximum time spent in constructing the 
data structure for a set of $r$ functions. We pay for the reconstruction cost by charging 
$2tP(m-m/2t)/m \le 2tP(n)/n$ time to each of the $m/2t$ delete operations that occurred 
before the reconstruction (the inequality holds because $P$ is assumed to be superlinear). 
The data structure for $\F$ is a (recursively defined) tree $\Psi(\F)$. We describe how a 
subtree $\Psi(\G)$, for a subset $\G$ of $\nu$ functions, is constructed.

If $\nu \le n_0$, for some sufficiently large constant $n_0$, then $\Psi(\G)$ is a single
leaf, and we simply store $\G$ at that leaf. So assume that $\nu > n_0$. The root of $\Psi(\G)$ 
stores the following data:

\medskip
\begin{itemize}
	\item A partition of $\G$ into subsets $\G_1, \ldots, \G_u$, $u \le \ceil{\log_2 t}$, 
%whose construction is described below.
as described below.
	\item For each $1 \le i \le u$, a vertical $(\nu/t)$-shallow cutting 
$\Xi_i$ for $\G_i$. 
For each $\Delta \in \Xi_i$, we store its conflict list $\G_{i,\Delta}$.
	\item For each $1 \le i \le u$ and $\Delta\in\Xi_i$, a pointer to the tree $\Psi(\G_{i,\Delta})$.
	\item For each function $f\in \G_i$, the set $L_f\subseteq\Xi_i$ of prisms $\Delta$
		crossed by the graph of $f$.
		%, (i.e., $f \in \G_{i,\Delta}$).
	\item A counter $\chi$ that keeps track of how many functions of $\G$ can be deleted before
		we reconstruct $\Psi(\G)$.
\end{itemize}

\medskip
After the construction of $\Psi(\G)$, before any deletions occur, the following properties hold:

\medskip
\begin{enumerate}[(P1)]
	\item The counter $\chi$ is set to $\nu/2t$.
	\item For each $i$, the $x_4$-monotone surface formed by the pseudo-cones of $\Xi_i$ 
(i.e., the ceilings of prisms of $\Xi_i$) lie above $\A_{\nu/t} (\G_i)$.
	\item For every $i$ and $\Delta\in \Xi_i$, $|\G_{i,\Delta}| \le cn/t$, 
for some suitable absolute constant $c$.
	\item Each function $f\in \G_i$ appears in the conflict lists of at most
    $\kappa := 2C_1t^{2+\delta}$ 
	prisms of $\Xi_i$, where $C_1$ and $\delta>0$ are the constants appearing in
		(\ref{eq:avg-cross}) below.
\end{enumerate}

We now describe the construction of the partition $\G_1, \ldots, \G_u$, which proceeds
iteratively. Suppose we have constructed $\G_1, \ldots, \G_{i-1}$. Set 
$\G_i^* = \G \setminus \bigcup_{j < i} \G_i$ and put $\nu_i = |\G_i^*|$. 
If $\nu_i \le \nu/t$, we stop. Otherwise, set $t_i = t\nu_i/\nu$ and $k=\nu_i/t_i=\nu/t$.

We construct a vertical $k$-shallow cutting $\Xi_i$ of $\G_i^*$ of size $O(t_i^{3+\delta})$,
for an arbitrarily small constant $\delta>0$. 
For each $\Delta\in \Xi_i$, $|\G_{i,\Delta}^*| \le c\nu_i/t_i = c\nu/t$. We note that
\begin{equation}
\label{eq:avg-cross}
\sum_{\Delta\in\Xi_i} |\G_{i,\Delta}^*| \le O(t_i^{3+\delta}) \frac{c\nu}{t} 
\le C_1 \left ( \frac{t\nu_i}{\nu}\right)^{3+\delta} \frac{\nu}{t} 
\le C_1 \nu_i t^{2+\delta} .
\end{equation}
A function $f \in \G_i^*$ is called \emph{good} if it appears in the conflict list of at 
most $\kappa$ prisms of $\Xi_i$ and \emph{bad} otherwise. 
Let $\G_i$ be the set of good functions of $\G_i^*$; set $\G_{i+1}^* = \G_i^*\setminus \G_i$.
By~(\ref{eq:avg-cross}), $|\G_{i+1}^*| \le \nu_i/2$. Hence, $u \le \ceil{\log_2 t}$. 
It is easily seen that properties (P1)--(P4) hold after the construction.
This completes the description of the data structure. The total size and preprocessing time are $O^*(n^3)$.

\ifsocgproc 
\subparagraph*{Query procedure.}
\else
\paragraph{Query procedure.}
\fi
Let $K(c_0,\rho_0)\in\Kspace$ be a query homothet.
Recall that $K(c_0,\rho_0)$ intersects a homothet of $\K$ if the point $(c_0,\rho_0)$ lies above the 
lower envelope of the corresponding set $\F$ of functions. The query is performed as follows. 
We search $\Psi(\F)$ in a top-down manner with $(c_0,\rho_0)$. Suppose we are at the root of a subtree
$\Psi(\G)$. If $\Psi(\G)$ is a leaf, we check all functions of $\G$ in a brute-force manner and 
return an answer in $O(n_0)=O(1)$ time. Otherwise, for each $i \le u$, we check whether $(c_0,\rho_0)$ 
lies above the lower envelope of $\G_i$. If the answer is yes for any $i$, 
we return yes and also return an intersecting homothet (see below).  Otherwise, we return no.
%(Returning an intersecting homothetic copy, in case an intersection is detected, is easy, and we omit the details.)

For a fixed $i\le u$, we proceed as follows.
We search with $c_0$, in a brute force manner, in $O^*(t^3) = O(1)$ time, 
to determine the pseudo-cone $\cell^\downarrow$ in  the projection of $\Xi_i$
that contains $c_0$. Let $f_i$ be the  function associated with $\cell$, i.e., 
the ceiling of $\Delta$ lies in the graph of $f_i$.
We test whether $\rho_0 \ge f_i(c_0)$. If so, $K_i$ and $K(c_0,\rho_0)$ intersect, so we
return yes and report $K_i$ as an intersection witness and terminate the query. If not, $(c_0,\rho_0)$
lies in the semi-unbounded prism $\Delta$,
and we continue the search recursively in the data structure $\Psi(\G_{i,\Delta})$.

Let $Q(\nu)$ be the maximum time spent by the query procedure in the subtree $\Psi(\G)$ 
storing at most $\nu$ functions. Then we obtain the following recurrence:
%$Q(\nu) \le \ceil{\log_2 t} Q(c\nu/t) + O(t^{3+\delta})$.
\[ Q(\nu) \le \ceil{\log_2 t} Q(c\nu/t) + O(t^{3+\delta}) .\]
Its solution is easily seen to be $Q(\nu) = O(n^\eps)$, for any constant $\eps>0$, provided that 
\ifsocgproc
$n_0$ and $t$ are chosen sufficiently large.
\else 
	$n_0$ and $t$ are chosen sufficiently large.\socgcomm{\footnote{%
			As shown in~\cite{AM95}, the query time can be improved to $O(\log n)$ by choosing $t=n^\delta$. 
  	The same improvement can be obtained here, but, for simplicity of exposition, we stay with 
  	the above weaker result.}}
  \fi

\ifsocgproc 
\subparagraph*{Deletion procedure.}
\else
\paragraph{Deletion procedure.}
\fi
Let $f\in\F$ be a function that we wish to delete from $\F$.
Again, we visit $\Psi(\F)$ in a top-down manner. Suppose we are at the root $v$ of a subtree $\Psi(\G)$. 
If $v$ is a leaf, then we simply delete $f$ from $\G$ and stop. If $v$ is an internal node, we first 
decrement the counter $\chi$ stored at $v$. If $\chi$ becomes $0$, we reconstruct $\Psi(\G)$ from
scratch (for the current $\G$). Otherwise, we find the index $i$ such that $f\in \G_i$. 
For each $\Delta\in L_f$,  we delete $f$ from $\G_{i,\Delta}$ and then recursively delete 
$f$ from $\Psi(\G_{i,\Delta})$. By construction, $|L_f|\le \kappa$. The deletion procedure 
maintains the properties (P3) and (P4), and (P2) is replaced with a slightly weaker property:

\smallskip
\begin{description}
	\item[(P2')] For every $i$, the $x_4$-monotone surface formed by the pseudo-cones of $\Xi_i$ 
        (i.e., the ceilings of prisms of $\Xi_i$) lie above or on $\A_{\nu/2t} (\G_i)$.
\end{description}

The correctness of the query procedure follows from property (P2'). Following the analysis 
in~\cite{AM95}, the amortized deletion time, including the time spent in reconstructing the data 
structure, is $O^*(n^2)$. 
Putting everything together, and combining this with the technique of Bentley and Saxe~\cite{BS80}
for insertions, we obtain:

%------------------------------------
\begin{lemma}
	\label{lem:log-qtime}
%Let $\K$ be a set of $n$ homothets of $K$.  
$\K$ can be preprocessed,
in $O^*(n^3)$ expected time, into a data structure of size $O^*(n^3)$, so that 
an intersection-detection query, including the cost of reporting an intersecting member of $\K$ 
if one exists, can be performed in $O^*(1)$ time. 
A homothet can be inserted into $\K$ or deleted from $\K$ in $O^*(n^2)$ amortized expected time.
\end{lemma}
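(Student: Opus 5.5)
The proof assembles the structure $\Psi(\F)$ described above and checks four things: preprocessing/space, query correctness and time, deletion time, and insertion time.

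\emph{Preprocessing and space.} Let $S(\nu)$ denote the size of $\Psi(\G)$ for $|\G|=\nu$. At the root we build $u\le\ceil{\log_2 t}$ vertical shallow cuttings using Theorem~\ref{thm:cutting}. The cutting $\Xi_i$ has $O(t_i^{3+\delta})\le O(t^{3+\delta})$ prisms, each with a conflict list of size at most $c\nu/t$, and it is computed in $O^*(\nu_i t_i^3)$ expected time. If a random sample fails the shallow-net or good-function properties (P2)--(P4), we resample; the expected number of trials is constant. This yields
\[ S(\nu)\le \ceil{\log_2 t}\, C t^{3+\delta} S(c\nu/t) + O(\nu t^{3+\delta}). \]
Since $t$ is a large constant, $\ceil{\log_2 t}\,Ct^{3+\delta}\le (t/c)^{3+\eps}$ once $\delta$ is small and $t$ is large. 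Hence the recurrence solves to $S(n)=O(n^{3+\eps})$, and the same argument gives $P(n)=O^*(n^3)$ expected preprocessing time.

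\emph{Query.} Correctness follows from (P2'). If $(c_0,\rho_0)$ lies below the ceiling of the prism $\Delta$ containing $c_0$, then every function of $\G_i$ lying below $(c_0,\rho_0)$ must cross $\Delta$, so it belongs to $\G_{i,\Delta}$ and recursing there is complete. If instead $\rho_0\ge \varphi(\Delta)(c_0)$, the label function is itself a witness. Two points need checking. First, the ceiling function must still belong to $\G$; otherwise we report a member of $\G_{i,\Delta}$ below the point, which exists because the ceiling lies above $\A_{\nu/2t}(\G_i)$ and at most $\nu/2t$ functions have been deleted. Second, the level statement must hold. The query recurrence $Q(\nu)\le\ceil{\log_2 t}Q(c\nu/t)+O(t^{3+\delta})$ gives $O^*(1)$ time. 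For reporting, once the search is inside a prism we output its conflict list, and the shallowness (P3) charges this $O(\nu/t)$ cost to the output size $k$.

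\emph{Deletions.} This is the main step and follows the analysis of~\cite{AM95}. Let $D(\nu)$ be the amortized deletion cost in $\Psi(\G)$. A deletion recurses into at most $\kappa=2C_1t^{2+\delta}$ children by (P4). In addition, each deletion pays the reconstruction charge $2tP(\nu)/\nu=O^*(\nu^2)$, using the counter $\chi$ and (P1). This gives
\[ D(\nu)\le \kappa\, D(c\nu/t)+O^*(\nu^2). \]
The crux is that $\kappa$ grows like $t^{2+\delta}$ rather than $t^3$, so the branching factor $\kappa$ only slightly exceeds $(t/c)^2$. For large $t$ the recursion therefore solves to $O(\nu^{2+\eps})$, that is, $O^*(n^2)$ amortized. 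I expect the delicate points to be two: keeping the constant $c$ and the $\log t$ factor from the partition absorbed into $n^\eps$, and confirming that rebuilds at inner nodes, which are triggered by their own counters, are correctly charged.

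\emph{Insertions.} Apply the Bentley--Saxe construction~\cite{BS80} with $P(n)=O^*(n^3)$. This gives amortized insertion time $O((P(n)/n)\log n)=O^*(n^2)$, while the query time stays $O^*(1)$ since only $O(\log n)$ substructures are queried. Deletions are routed to the substructure that contains the function being deleted.
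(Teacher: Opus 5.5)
Your proposal follows the paper's proof essentially step for step. It uses the same Agarwal--Matou\v{s}ek tree $\Psi(\F)$ of vertical $(\nu/t)$-shallow cuttings with the good/bad partition $\G_1,\ldots,\G_u$, the same query recurrence $Q(\nu)\le\ceil{\log_2 t}\,Q(c\nu/t)+O(t^{3+\delta})$, the same deletion recurrence $D(\nu)\le\kappa D(c\nu/t)+O^*(\nu^2)$ with $\kappa=O(t^{2+\delta})$ and the $2tP(\nu)/\nu$ rebuilding charge, and Bentley--Saxe for insertions.

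The one step you add beyond the paper is returning a witness when the ceiling's label function has already been deleted, and you leave its cost unanalyzed: scanning $\G_{i,\Delta}$ takes $\Theta(\nu/t)$ time, not $O^*(1)$. The paper's own query procedure does not address this case either. It is easily repaired, for instance by probing random members of the current $\G$: by (P2'), at least a $1/(2t)$ fraction of them lie below the query point whenever that point is above the ceiling, so $O(t)=O(1)$ probes suffice in expectation.
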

%------------------------------------

%-----------------------------------------------------
\subsection{Linear-size data structure}
\label{subsec:linear}

Next, we present a linear-size dynamic data structure that supports intersection-detection queries 
in $O^*(n^{2/3})$ time each, and can handle updates (insertions and deletions) in $O(\log^2 n)$ 
amortized expected time per update.

For a homothet $K(c_0,\rho_0)$, let $f_{c_0,\rho_0}^+ = \{(c,\rho) \in \reals \mid \rho \ge f_{c_0,\rho_0}(c) \}$ 
be the region lying above the graph of the distance function $f_{c_0,\rho_0}$, which corresponds to the set of 
homothets that intersect $K({c_0,\rho_0})$. Put $\K^* = \{(c,\rho) \mid K(c,\rho) \in\K\}\subset\reals^4$.
For a query homothet $K(c_0,\rho_0)$, we wish to determine whether any point of $\K^*$ 
lies in $f_{c_0,\rho_0}^+$. We construct a linear-size partition tree for answering these queries, 
following the same approach as in~\cite{AM95,Mat92,ShSh}. We need a few definitions. 
Let $\Fspace$ be the set of distance functions corresponding to the homothets in $\Kspace$.
Let $P \subset \Kspace$ be a set of $n$ points (representing homothets of $K$).
For a parameter $k\ge 1$, we call a semi-algebraic set 
$\gamma \subset \Kspace$, which is semi-unbounded in the $(-x_4)$-direction,
\emph{$k$-shallow} if $|P \cap \gamma| \le k$. 
A major ingredient of the approach in~\cite{Mat92,ShSh} is to construct a 
so-called \emph{test set} composed of a small number of semi-algebraic sets, which
represent well (in the sense made precise below) all distance functions of $\Fspace$ 
that are $(n/r)$-shallow, for a given parameter $r>1$, with respect to $P$. 
Formally, a finite collection $\Phi$ of semi-algebraic sets of constant complexity (not necessarily 
a subset of $\Fspace$) is called a \emph{test set} for $(n/r)$-shallow ranges in $\Fspace$ with 
respect to $P$ if it satisfies the following properties: 

\medskip
  \begin{description}
\item[(i)]
  Every set in $\Phi$ is $(n/r)$-shallow with respect to $P$. 
\item[(ii)]
  The complement of the union of any $m$ sets of $\Phi$ can be decomposed 
  into at most $\zeta(m)$ ``elementary cells'' (semi-algebraic sets of constant complexity) for any $m\ge 1$, 
  where $\zeta(m)$ is a suitable monotone increasing superlinear function of $m$. 
\item[(iii)]
  Any $(n/r)$-shallow set $\sigma\in \Fspace$ can be covered by the union 
  of at most $\delta$ ranges of $\Phi$, where $\delta$ is a constant (independent of $r$).
\end{description}

\medskip
Sharir and Shaul~\cite{ShSh} showed that if there exists a test set of size $r^{O(1)}$ for $(n/r)$-shallow 
ranges of $\Fspace$, then one can construct a linear-size partition tree on $P$ so that an
$\Fspace$-emptiness query can be answered in $O^*(n/\zeta^{-1}(n))$ time, for the corresponding function $\zeta$. 

\ifsocgproc 
\subparagraph*{Test-set construction.}
\else
\paragraph{Test-set construction.}
\fi
We describe an algorithm for constructing a test set 
of size $O(r^4)$ with $\zeta(m)=O^*(m^3)$ and $\delta=1$.
Let $\F$ be the set of distance functions of $\K$ as above. We take a 
random subset $R\subseteq \F$ of $s = cr\log r$ functions, for some sufficiently large constant $r$, 
where $c>0$ is another sufficiently large absolute constant, independent of, and much smaller than $r$, 
and construct the (standard) 4D
vertical decomposition $\VeD{\A}(R)$ of the arrangement $\A(R)$, of size $O^*(r^4)$, as 
described in~\cite{AES,Koltun-04a,SA}.
By a standard random-sampling argument~\cite{HW87}, with probability at least $1/2$, 
each cell of $\VeD{\A}(R)$ is crossed by at most $n/r$ functions of $\F$, 
provided that $c$ is chosen sufficiently large.
If this is not the case, we discard $R$ and choose another random subset, until we find, in
expected $O(1)$ trials, one with
the desired property. 
We clip $\VeD{\A}(R)$ within the halfspace $x_4\ge 0$ (to restrict it to within $\Kspace$).
We choose a subset $\Xi$ of the cells of $\VeD{\A}(R)$, consisting of those cells that have at most 
$n/r$ functions of $\F$ passing \emph{fully below} them. By construction, these cells cover 
$\Ak{n/r}(\F)$, and are contained in $\Ak{2n/r}(\F)$.

Let $\cell$ be a cell of $\Xi$.
Set $\Phi_\cell = \bigcup_{(c_0,\rho_0)\in\cell} f_{c_0,\rho_0}^+$.
Since $\cell$ and $f_{c_0,\rho_0}$ are semi-algebraic sets of constant complexity, 
$\Phi_\cell$ is also a semi-algebraic set of constant complexity. 
For a homothet $K_i\in \K$, the point $K_i^* \in \Kspace$ lies in $\Phi_\cell$ if and only 
if there is a point $(c_0,\rho_0)\in\cell$ such that $K_i^* \in f_{c_0,\rho_0}^+$. This happens
when $(c_0,\rho_0)\in f_i^+$, i.e., the graph of 
the function $f_i$ crosses $\cell$ or lies below $\cell$.
By construction, there are at most $n/r + n/r = 2n/r$ such functions. 
Consequently, $\Phi_\cell$ is $(2n/r)$-shallow with respect to the points of $\K^*$ with
$\delta = 1$.

Set $\Phi = \{\Phi_\cell \mid \cell \in \Xi\}$.
$\Phi$ is a family of $O^*(r^4)$ constant-complexity semi-algebraic 
sets
in $\Kspace$, each of which is $(2n/r)$-shallow with respect to $\K^*$. This is the desired test set.
Each set $\Phi_\cell$ is unbounded in the positive $x_4$-direction, therefore the 
complement of the union of a subset of $m$ sets is the region lying below their 
lower envelope. By a result in~\cite{AES:vd}, the complement of their union (i.e., the 
region below their lower envelope) can be decomposed into $O^*(m^3)$ pseudo-prisms. Hence, we obtain:
%----------------------------------
\begin{lemma} \label{lem:btest}
Let $P\subset \Kspace$ be a set of $n$ points and $r\ge 1$ a parameter. A test set $\Phi$ 
	of size $O(r^4)$ for the functions in $\Fspace$ that are $(n/r)$-shallow 
	with respect to $P$ can be computed in $O^*(r^4)$ expected time, so that
	$\zeta(m)=O^*(m^3)$ and $\delta=1$.
\end{lemma}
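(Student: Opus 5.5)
The plan is to verify the three test-set properties for the family $\Phi=\{\Phi_\cell\mid \cell\in\Xi\}$ constructed just above, then bound its size and construction time. The work splits into the sampling step, the shallowness and covering properties, and the union-complexity property; the covering property (iii) is where I expect the real difficulty.

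\textbf{Sampling and size.} Draw a random $R\subseteq\F$ of size $s=cr\log r$ and build the 4D vertical decomposition $\VeD{\A}(R)$, which has $O^*(s^4)=O^*(r^4)$ cells and takes $O^*(r^4)$ expected time~\cite{AES,Koltun-04a}. The range space (functions of $\F$ versus constant-complexity cells) has bounded VC-dimension, so by the $\eps$-net theorem~\cite{HW87} with $\eps=1/r$, with probability at least $1/2$ every cell is crossed by at most $n/r$ graphs; we resample until this holds, which costs $O(1)$ expected trials. For each cell we count the functions passing fully below it, which identifies $\Xi$. Note this step and verifying the conflict sizes take $O^*(nr^4)$ time if done naively; to get $O^*(r^4)$ we would instead work with an $\eps$-approximation of $P$ of size $r^{O(1)}$, or simply note that $r$ is treated as a constant in the partition-tree recursion of~\cite{ShSh}. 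Each $\Phi_\cell$ is a projection of a constant-complexity semi-algebraic set, so by quantifier elimination it is semi-algebraic of constant complexity.

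\textbf{Shallowness (i).} A point $K_i^*$ lies in $\Phi_\cell$ exactly when the graph of $f_i$ crosses $\cell$ or passes fully below it, so $|\Phi_\cell\cap P|\le 2n/r$. Recalibrating $r$ by a factor of $2$ gives $(n/r)$-shallowness.

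\textbf{Covering (iii) with $\delta=1$.} Let $\sigma=f^+_{c_0,\rho_0}$ be an $(n/r)$-shallow range. By the symmetry $\dist_K(c,c_0)\le\rho+\rho_0$, the point $(c_0,\rho_0)$ lies above exactly the graphs $f_i$ with $K_i^*\in\sigma$. Hence $(c_0,\rho_0)$ has level at most $n/r$ in $\A(\F)$. It therefore lies in some cell $\cell$ of $\VeD{\A}(R)$, and that cell belongs to $\Xi$: this is the claim that the cells of $\Xi$ cover $\Ak{n/r}(\F)$. Because the cell containing a point of level $\le n/r$ has at most $n/r$ functions fully below it, this claim holds, but it must be argued carefully at cell boundaries, using closed cells. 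Once $(c_0,\rho_0)\in\cell$, the definition of $\Phi_\cell$ gives $\sigma\subseteq\Phi_\cell$, so a single range suffices.

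\textbf{Union complexity (ii).} Each $\Phi_\cell$ is upward-closed in $x_4$: since $f^+$ is increasing in $\rho$, each $\Phi_\cell$ is the epigraph of the trivariate function $\psi_\cell(c)=\min_{(c_0,\rho_0)\in\cell}(\dist_K(c,c_0)-\rho_0)$. The complement of the union of $m$ such sets is the region below the lower envelope of $m$ constant-complexity semi-algebraic trivariate functions. By~\cite{AES:vd}, that region admits a vertical decomposition into $O^*(m^3)$ pseudo-prisms, which gives $\zeta(m)=O^*(m^3)$. The step I expect to be the main obstacle is confirming that the $\psi_\cell$ fall under the hypotheses of~\cite{AES:vd}: they must be (partially defined) continuous semi-algebraic functions of constant complexity in general position. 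I would first try to establish this via quantifier elimination and a standard perturbation argument.
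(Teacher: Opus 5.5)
Your proposal is correct and follows the paper's argument essentially step for step. You sample $R$ and keep the cells of $\VeD{\A}(R)$ with at most $n/r$ functions fully below them. You set $\Phi_\cell=\bigcup_{(c_0,\rho_0)\in\cell}f^+_{c_0,\rho_0}$ and get shallowness from ``crosses or lies below''. You get covering with $\delta=1$ because these cells cover $\Ak{n/r}(\F)$, and $\zeta(m)=O^*(m^3)$ from \cite{AES:vd} applied to the epigraphs.

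Your explicit covering argument and your caveats are fair refinements the paper does not make. These are the $n$-dependent cost of identifying $\Xi$, which the paper's stated $O^*(r^4)$ bound does not account for, and the applicability of \cite{AES:vd} to the functions $\psi_\cell$.

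One quibble: rescaling $r$ by $2$ cannot make (i) and (iii) hold at the same threshold with this construction. Like the paper, you should simply accept the harmless factor-$2$ slack: the test ranges are $(2n/r)$-shallow and cover every $(n/r)$-shallow range.
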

%----------------------------------

Using Lemma~\ref{lem:btest} and adapting the approach in~\cite{ShSh}, we can preprocess $\K$ into
a linear-size data structure that supports intersection-detection queries in $O^*(n^{2/3})$ time. 
By (re)constructing portions of the partition tree periodically (as in the previous data structure), 
deletions can be handled efficiently. Omitting the straightforward details, we obtain:

%------------------------------------
\begin{lemma}
	\label{lem:linear-qtime}
Let $\K$ be a set of $n$ homothets of $K$.  $\K$ can be preprocessed,
in $O(n\log n)$ expected time, into a data structure of size $O(n)$, so that an intersection-detection query 
can be answered in $O^*(n^{2/3})$ time. A homothet of $K$ 
can be inserted into or deleted from the data structure in $O(\log^2 n)$ amortized time.
\end{lemma}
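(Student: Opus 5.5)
We would prove Lemma~\ref{lem:linear-qtime} by plugging the test set of Lemma~\ref{lem:btest} into the shallow-partition-tree machinery of Sharir and Shaul~\cite{ShSh} (which follows Matou\v{s}ek~\cite{Mat92}). The static structure comes first, then the dynamization.

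\textbf{Step 1: a shallow partition theorem.} Fix $r$ and take the test set $\Phi$ of Lemma~\ref{lem:btest} for $(n/r)$-shallow ranges with respect to $P=\K^*$. It has size $O(r^4)$, $\zeta(m)=O^*(m^3)$, and $\delta=1$.

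Run the iterative reweighting construction of~\cite{Mat92,ShSh} over the ranges of $\Phi$. In each round, take the sets of $\Phi$ whose current weight is heavy enough to form a sample of size $m$. Decompose the complement of their union into $\zeta(m)=O^*(m^3)$ elementary cells, and pick a cell containing many remaining points of $P$. Then double the weights of the test ranges crossed by that cell.

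The standard accounting shows the result. We get a partition of $P$ into $O(r)$ classes $P_j$, each of size $\Theta(n/r)$, each enclosed in a constant-complexity cell $\sigma_j$. Every test range of $\Phi$ crosses at most $O^*(r^{2/3})$ of the cells $\sigma_j$, since $\zeta^{-1}(r)=r^{1/3}$ and the crossing number is $r/\zeta^{-1}(r)$.

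By property (iii) with $\delta=1$, every $(n/r)$-shallow query range $f^+_{c_0,\rho_0}$ is contained in a single set of $\Phi$. A range contained in a set cannot cross more cells than that set does, so every $(n/r)$-shallow query crosses only $O^*(r^{2/3})$ cells $\sigma_j$.

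\textbf{Step 2: the partition tree and the query.} Build the tree recursively, choosing $r=n^{\eps}$ (or a large constant) at each node. Store each $\sigma_j$, and recurse on each $P_j$. This gives $O(n)$ space, and $O(n\log n)$ expected preprocessing once we use the expected constant number of resamplings.

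To answer a query $f^+_{c_0,\rho_0}$ at a node:
\begin{itemize}
\item If some cell $\sigma_j$ lies entirely inside the range and $P_j\ne\emptyset$, report a point of $P_j$; that homothet intersects $K_0$.
\item Otherwise, if the range crosses more than $O^*(r^{2/3})$ cells, the range is not $(n/r)$-shallow. It therefore contains a point of $P$, and we answer yes. A witness can be found by brute force in $O(r)$ cells, or by following the standard emptiness variant.
\item Otherwise, recurse into the $O^*(r^{2/3})$ crossed cells.
\end{itemize}
This gives the recurrence $Q(n)\le O^*(r^{2/3})\,Q(n/r)+O(r^{O(1)})$, which solves to $O^*(n^{2/3})$.

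\textbf{Step 3: dynamization.} Deletions are handled by marking deleted points and decrementing the counts in the stored classes. Each node is rebuilt after a constant fraction of its points has been deleted, exactly as with the counter $\chi$ in Section~\ref{subsec:log}. Because the tree has linear size and $O(\log n)$ depth, this charges $O(\log^2 n)$ amortized time per deletion. Insertions use Bentley--Saxe~\cite{BS80}, which costs $O((P(n)/n)\log n)=O(\log^2 n)$ amortized and keeps the $O^*(n^{2/3})$ query bound.

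\textbf{Main obstacle.} The hard step is Step 1, namely verifying that the reweighting argument of~\cite{ShSh} goes through when the test ranges $\Phi_\tau$ are $(2n/r)$-shallow rather than $(n/r)$-shallow, and when the complexity bound $\zeta(m)=O^*(m^3)$ comes from the lower-envelope decomposition of~\cite{AES:vd} for the sets $\Phi_\tau$ (which are unbounded upward). Recalibrating $r$ by a factor of $2$ should handle the first issue. For the second, I would check that the cells of the complement decomposition have constant complexity and that the range space they induce has finite VC-dimension, so that the $\eps$-net step inside the reweighting is valid.
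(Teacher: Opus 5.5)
Your proposal is correct and takes essentially the same route as the paper, which likewise plugs the test set of Lemma~\ref{lem:btest} (with $\zeta(m)=O^*(m^3)$ and $\delta=1$) into the shallow partition-tree machinery of~\cite{ShSh}, handles deletions by periodically reconstructing subtrees, and handles insertions by Bentley--Saxe. Two small points to tighten: a query range contained in $\Phi_\tau$ can also cross cells that $\Phi_\tau$ fully contains, but there are only $O(1)$ of these because $\Phi_\tau$ is $(2n/r)$-shallow and each class has $\Theta(n/r)$ points; and the rebuild threshold should be about $n_v/(2r)$ deletions (the analogue of $\chi=\nu/2t$), so that a range that is not $(n_v/r)$-shallow for the build-time set is still guaranteed to contain an undeleted point.
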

%------------------------------------

By combining this data structure with Lemma~\ref{lem:log-qtime} in a 
standard manner~\cite{AAEKS}, we can 
obtain the following space/query-time trade-off:
%------------------------------------
\begin{theorem}
	\label{thm:tradeoff}
	Let $\K$ be a set of $n$ homothets of $K$, and let $s\in[n,n^3]$ be a storage parameter.  
        $\K$ can be preprocessed, in $O^*(s)$ expected time, into a data structure of size $O^*(s)$, so 
	that an intersection-detection query can be answered 
	in $O^*(n/s^{1/3})$ time. A homothet can be inserted into 
	or deleted from the data structure in $O^*(s/n)$ amortized time.
\end{theorem}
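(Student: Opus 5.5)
The plan is the standard space/query-time interpolation of~\cite{AAEKS}: use the linear-size partition tree of Lemma~\ref{lem:linear-qtime} as the top-level structure, and replace its subtrees at a suitable depth by the fast-query structure of Lemma~\ref{lem:log-qtime}. The top levels give $O^*(n^{2/3})$-type query cost over few points, and the bottom levels answer subqueries in $O^*(1)$ time with cubic storage.

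\textbf{Step 1 (truncate the partition tree).} Build the partition tree of~\cite{ShSh} on $\K^*$ using the test sets of Lemma~\ref{lem:btest}, where $\zeta(m)=O^*(m^3)$. Stop the recursion at nodes whose point sets have size at most $m:=n/s^{1/3}$. Each node splits its set of size $\nu$ into $r$ classes of size about $\nu/r$, so an $(\nu/r)$-shallow range crosses only $O^*(r^{2/3})$ classes. Hence the query visits $O^*((n/m)^{2/3})$ leaves of the truncated tree, and $O^*(n^{2/3})$ time is spent in internal nodes overall.

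\textbf{Step 2 (secondary structures).} At each leaf $v$, with point set $P_v$ of size $\le m$, store the structure of Lemma~\ref{lem:log-qtime} on $P_v$. Its size is $O^*(m^3)$, and there are $O(n/m)$ leaves, so storage and preprocessing total $O^*(nm^2)=O^*(n^3/s^{2/3})$.

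This is not the trade-off we want, so the roles must be reversed. The correct arrangement puts the cutting-based structure on top and the partition trees at the bottom. Take the tree $\Psi(\F)$ of Section~\ref{subsec:log}, and terminate the recursion when conflict lists have size $\le m$. Since cuttings are of size $O^*(t^3)$, this happens after producing $O^*((n/m)^3)$ leaf prisms. At each leaf, store the linear-size structure of Lemma~\ref{lem:linear-qtime} on its conflict list. The storage is $O^*((n/m)^3\cdot m)$, and setting this equal to $s$ gives $m=n^{3/2}/s^{1/2}$. A query descends $O^*(1)$ paths through the top part and then makes $O^*(1)$ linear-structure queries, each costing $O^*(m^{2/3})=O^*(n/s^{1/3})$, as claimed.

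\textbf{Step 3 (updates).} A deletion of $f$ is propagated through the top levels using the lists $L_f$, touching $O^*((n/m)^2)$ leaf structures, each of which is updated in $O(\log^2 n)$ time. Since $(n/m)^2 = s/n$, this gives $O^*(s/n)$. Reconstruction at internal nodes, charged via the counters $\chi$ as in Section~\ref{subsec:log}, costs $O^*(s)/n$ per deletion amortized. Insertions are handled by Bentley--Saxe~\cite{BS80}; this adds a logarithmic factor to the $O^*(s)/n$ amortized cost and preserves the query bound. Reporting and NN queries follow as before.

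The main obstacle is the update bound. We must check that the ``good function'' partition (P4) guarantees each function lies in only $O^*(t^2)$ conflict lists per level, so that across the truncated recursion it reaches only $O^*((n/m)^2)$ leaves. We must also check that the amortized reconstruction analysis of~\cite{AM95} survives when the leaves are partition trees rather than constant-size lists.
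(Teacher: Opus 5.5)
Your final arrangement is correct and is exactly the standard combination~\cite{AAEKS} that the paper invokes without further detail: the shallow-cutting tree $\Psi(\F)$ of Section~\ref{subsec:log}, truncated once conflict lists have size at most $m=n^{3/2}/s^{1/2}$, with the structure of Lemma~\ref{lem:linear-qtime} at the leaves, gives storage and preprocessing $O^*(n^3/m^2)=O^*(s)$, query time $O^*(m^{2/3})=O^*(n/s^{1/3})$, and, via (P4) and the $\chi$-counter charging, deletion cost $O^*((n/m)^2)=O^*(s/n)$. One correction to your discarded first attempt: the partition-tree-on-top arrangement also works if its leaves have size $m=\sqrt{s/n}$ (the truncated tree spends only $O^*((n/m)^{2/3})$, not $O^*(n^{2/3})$, at internal nodes), so reversing the roles was a choice, not a necessity.
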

%------------------------------------

\iffullver
\begin{remark} 
	\label{rem:ds}
	(i) Using standard techniques~\cite{Mat92,AM95}, the above data structure can be adapted in a 
        straightforward manner to report all $k$ homothets intersected by 
	a query homothet in an additional $O(k)$ time.

	(ii) The above theorem, with $s = n^{3/2}$, implies that a sequence of $n$ insertion/deletion/query 
        operations on the data structure can be performed in $O^*(n^{3/2})$ expected time.
\end{remark}
\fi

Finally, by combining this data structure with the parametric-search technique~\cite{AM93}, 
we establish Theorem~\ref{thm:NN} (for answering nearest-neighbor queries in $\K$ under the $K$-distance).

%------------------------------------
% \begin{theorem}
	% \label{thm:NN}
	% Let $\K$ be a set of $n$ homothets of $K$, and let $s\in[n,n^3]$ be a storage parameter.  $\K$ can be preprocessed,
% in $O^*(s)$ expected time, into a data structure of size $O(s)$, so that for any point $x\in\reals^3$, its nearest neighbor in $\K$ under the $K$-distance can be computed
	% in $O^*(n/s^{1/3})$ time. A homothet can be inserted into $\K$ or deleted from $\K$ 
	% in $O(s/n)$ amortized time.
% \end{theorem}
%------------------------------------

\iffullver
\section{Breadth-First Search, Reverse Shortest Paths, and Related Problems} 
\label{sec:bfs-rsp}

In this section we present applications of our techniques to various basic graph algorithms
on intersection or proximity graphs of a set of homothets in $\reals^3$.

In this section we present applications of our techniques to the reverse shortest path problem.

\ifsocgproc 
\subparagraph*{Breadth-first search.}
\else
\paragraph{Breadth-first search.}
\fi
We follow the standard high-level approach. We run the BFS layer by layer, where the first layer
contains just the start object $K(c_s,\rho_s)$. Consider the step where we have constructed layer 
$L_i$ and wish to construct the next layer $L_{i+1}$. Let $\U_i$ denote the set of all members of 
$\K$ that the BFS has not yet reached (up to this step). We maintain $\U_i$ in the dynamic data 
structure of Section~\ref{sec:ds}, which supports deletions and queries (insertions are irrelevant 
here), where a query specifies an object $K(c_0,\rho_0)$ 
(in $L_i$), and wants to determine whether $K(c_0,\rho_0)$ intersects some object $K(c,\rho)$ of 
$\U_i$, and, if so, to return one such object (or, alternatively, report all intersecting homothets
in $\U_i$). The output object, or objects, if they exist, are added to $L_{i+1}$,
and are immediately removed from the structure (and from $\U_i$), to make sure that they are not reported 
again (by other elements of $L_i$, or by elements of future layers).

%We fix the storage parameter $s$ (to a value that will be specified shortly). A query then takes $O^*(n/s^{1/3})$ time, and a deletion takes (amortized) $O^*(s/n)$ time. Since 
The BFS procedure executes a total of $O(n)$ queries and deletions, therefore by Remark~\ref{rem:ds} 
(following Theorem~\ref{thm:tradeoff}, with $s=n^{3/2}$),
the overall cost of the the algorithm is $O^*(n^{3/2})$, as stated in
Theorem~\ref{thm:bfs}~(a).

\ifsocgproc
\subparagraph*{Depth-first search.}
\else
\paragraph{Depth-first search.}
\fi
The same machinery can be used to perform depth-first search on the graph $\G(\K)$.
At each step we are at some homothet $K(c_0,\rho_0)\in\K$, and want to find an intersecting neighbor 
that has not yet been visited. As above, we maintain the set $\U$ of all members of $\K$ that 
the DFS has not yet reached, using the dynamic data structure of Section~\ref{sec:ds}, with the
same choice of $s=n^{3/2}$. Using this structure, we find a neighbor of $K(c_0,\rho_0)$
(if one exists), delete it from $\U$, and continue the search at that neighbor (or back up
to the parent node, or start a new DFS tree). As above, since we perform
$O(n)$ searches and deletions, the overall cost is $O^*(n^{3/2})$ time. 
This, together with the preceding analysis, establish Theorem~\ref{thm:bfs}~(a).

%\ifsocgproc
%\subparagraph*{Reverse shortest paths.}
%\else
%\paragraph{Reverse shortest paths.}
%\fi
\paragraph{Reverse shortest paths.}
We next use Theorem~\ref{thm:bfs} to solve the corresponding reverse shortest path (RSP) problem,
defined as follows.
Let $\K$ be as above, and let $r>0$ be a parameter. We consider the setup where, for each 
$K(c,\rho)\in\K$, we expand its size by adding $r$ to the size of each element of $\K$, 
i.e., replacing each $K(c,\rho)$ by $K(c,\rho+r)$.
Define $\G_r(\K)$ to
be the intersection graph of the expanded homothets in $\K$. We are then given two elements
$K(c_s,\rho_s)$, $K(c_t,\rho_t)$ of $\K$ and an integer $1\le k < n$, and seek the smallest value
$r^*$ for which $\G_{r^*}(\K)$ contains a path from $K(c_s,\rho_s+r^*)$ to $K(c_t,\rho_t+r^*)$,
of at most $k$ edges. This is the \emph{reverse shortest path} (RSP) problem for that graph.

The decision problem associated with this RSP problem is, given $r$, to decide whether $\G_r(\K)$
contains such a path. If the answer is positive (resp., negative) then $r^*\le r$ (resp.,
$r^* > r$). Solving this decision problem is trivially performed using BFS on $\G_r(\K)$, 
starting from $K(c_s,\rho_s+r)$, so, by Theorem~\ref{thm:bfs}, the resulting decision procedure
runs in $O^*(n^{3/2})$ time.

We now solve the RSP problem using the shrink-and-bifurcate technique (see \cite{BFKKS,CH,KKSS}),
adapted to the setup at hand. The \emph{critical values} of $r$ are values at which some comparison involving
$r$ that the BFS algorithm performs changes its outcome. These critical values are of several kinds. 
First, these are 
values of $r$ at which two expanded homothets, $K(c,\rho+r)$, $K(c',\rho'+r)$, become externally 
tangent to one another (they are disjoint for smaller values of $r$ and intersect from this $r$ 
and above). This happens when
\begin{equation} \label{eq:crit} 
\dist_K(c,c') = (\rho+r) + (\rho'+r) = \rho + \rho' + 2r, \;\;\text{or} \;\; 
r = \frac12 \left( \dist_K(c,c') - \rho - \rho' \right) .
\end{equation}

Other kinds of critical values arise in comparisons that occur during the construction of, and 
search in, the data structure of Section~\ref{sec:ds}. There are many kinds of such comparisons,
each of which either involves the manipulation of $O(1)$ functions $f_{c,\rho}$ (that is, 
$f_{c,\rho+r}$), or comparisons between points and functions, which determine whether the
point lies above, on, or below the function. The latter kind of comparisons gives rise to
critical values of the first kind (given in (\ref{eq:crit})). 

The shrinking part of the technique receives a parameter $L$, and aims to construct an interval $I$
that contains $r^*$ and at most $L$ other critical values. This procedure is based on a variant 
of ``distance selection'', which in turn is based on a procedure for counting the number of 
critical values that are smaller than or equal to $r_0$, for some query parameter $r_0$. 
This procedure is implemented using off-line semi-algebraic range searching. 
Specifically, for critical values in (\ref{eq:crit}), each homothet
$K(c,\rho)$ is mapped to the point $(c,\rho)\in\reals^4$, as before, and also to the range
\[
\sigma(c,\rho) = \{ (c',\rho') \in\reals^4 \mid f_{c,\rho}(c') = \dist_K(c,c') - \rho \le \rho' + 2r_0 \} .
\]
The goal is then to count the number of point-range containments. Each range is a semi-algebraic region
of constant complexity, and both points and ranges have four degrees of freedom. The general technique
for offline semi-algebraic range searching, as presented, e.g., in \cite[Appendix]{AAEKS}, runs in
this case in time $O^*(n^{8/5})$, which implies that, using the improved procedure 
of Chan and Huang~\cite{CH}, interval shrinking (restricted to this type of critical values)
can be performed in $O^*(n^{8/5}/L^{4/5})$ expected time.

To handle critical values of other kinds, we observe that our additive expansion rule
subtracts the same value $r$ from each of these functions, so comparisons involving any 
set of $O(1)$ functions can be determined independently of $r$. Put differently, we can
think of the process as constructing the data structure without $r$ (i.e., with $r=0$),
in a manner that requires no simulation, and then query the structure with $K(c_0,\rho_0+2r)$.
Hence the only $r$-dependent comparisons occur during a query.

Unfortunately, here there is a significant difference between the fast-query version of
the structure and the linear-space version. In the fast-query version, the query is with 
the point $(c_0,\rho_0+2r)$, and each $r$-dependent comparison tests the position of this point
(above, on, below) with respect to a function in $\F$. In other words, all the corresponding
critical values are of the kind in (\ref{eq:crit}). In contrast, in the linear-storage version, 
the query is with a surface, dual to the point $(c_0,\rho_0+2r)$, and certain steps in the 
execution of a query seek cells of the current decomposition that are intersected
by the surface or lie fully below it, and these operations, which do depend on $r$, generate
critical values that are not of the type (\ref{eq:crit}).

In principle, this is not a real problem. As just briefly reviewed, the shrinking stage 
(see \cite{BFKKS,CH,KKSS}), relies on a ``distance-selection'' procedure, which is
implemented as an off-line semi-algebraic range-counting procedure, where the ranges
are defined by some semi-algebraic predicate of constant complexity. For critical
values of the former kind, the predicate is given by (\ref{eq:crit}). For critical
values of the other kinds, we just have to use different predicates. Here too the
queries have four degrees of freedom (defined by the parameters $(c_0,\rho_0+2r)$), 
but the data objects (e.g., a vertex of a cell in the decomposition) may have many more 
degrees of freedom, and this will slow down the procedure.

There are two ways out of this difficulty. One approach would be to bound the maximum 
number of degrees of freedom of the data objects, which is some small constant, and then run the relevant range 
searching procedure, using, e.g., the analysis in \cite[Appendix]{AAEKS}, with the
inferior time bound that it would yield. The other approach, which is the one that we follow,
is to give up in the simulation the use of the linear-space version of the structure,
and just use the fast-query version.

Recall that this version, applied to a set of $n$ homothetic copies, uses a total of $O^*(n^3)$
storage and preprocessing, answers a query in $O^*(1)$ time, and performs a deletion in
$O^*(n^2)$ (amortized) time. To make it more efficient, we partition $\K$ into $O(n/t)$ subsets, 
each consisting of at most $t$ homothets, for a suitable parameter $t$ that will be fixed shortly,
and construct a different structure for each subset. 
A query has now to be performed on each of these structures, for a total of $O^*(n/t)$ time,
while a deletion has to be performed only within its own subset, for a cost of $O^*(t^2)$
amortized expected time. Choosing $t=n^{1/3}$ makes the two bounds asymptotically the same, 
for a total cost (for $O(n)$ queries and deletions) of $O^*(n^{5/3})$.

The standard analysis of the bifurcation stage~\cite{BFKKS} yields the bound $O^*(L^{1/2}D(n))$ on its
running time, where $D(n)$ is the cost of the decision procedure. Here however, we need to
modify the analysis because we have two versions of the decision procedure, the one that we 
simulate, and the one that runs, without simulation, on some concrete values $r_0$ of $r$.
The respective running time bounds of these versions are $D_1(n) = O^*(n^{5/3})$ and
$D_0(n) = O^*(n^{3/2})$.

Here is a sketch of the modified analysis (see \cite{BFKKS,KKSS} for full details). 
The standard analysis uses two parameters $X$, $Y$, where $X$ is a threshold on the 
number of unresolved comparisons accumulated in a single phase of the bifurcation, 
and $Y$ is a threshold
on the total amount of simulation steps performed in a phase. The cost of a phase is
$O^*(XY + D_0(n))$, where the second term results from the binary search for $r^*$ at the 
end of the phase. We take $XY = D_0(n)$. The number of phases is 
${\displaystyle O\left( \frac{L}{X} + \frac{D_1(n)}{Y} \right)}$, which we optimize by making 
these two terms equal, i.e., by choosing ${\displaystyle Y = \frac{D_1(n)X}{L}}$. Solving this 
system of equations, we get ${\displaystyle X = L^{1/2}\left( \frac{D_0(n)}{D_1(n)}\right)^{1/2}}$, 
and the overall cost is then
\[
O^*\left( XY\cdot \frac{D_1(n)}{Y} \right) = O^*(D_1(n)X) = O^*(L^{1/2}D_0(n)^{1/2}D_1(n)^{1/2}) .
\]
In our case, $D_1(n) = O^*(n^{5//3})$ and $D_0(n) = O^*(n^{3/2})$, so the overall bound is
\[
O^*\left( \frac{n^{8/5}}{L^{4/5}} + L^{1/2}n^{3/4}n^{5/6} \right) =
O^*\left( \frac{n^{8/5}}{L^{4/5}} + L^{1/2}n^{19/12} \right) .
\]
To balance the two terms, we choose $L^{13/10} = n^{1/60}$, or $L = n^{1/78}$, and the bound then becomes
$O^*\left(n^{62/39} \right)$. That is, we have obtained the bound asserted in Theorem~\ref{thm:rsp}.
%---------------------------------------------
% \begin{theorem} \label{thm:rsp}
% The reverse shortest path in the intersection graph $G(\K)$ can be performed in $O^*(n^{224/143})$ time.
% \end{theorem}
%---------------------------------------------

\medskip
%\noindent{\bf Remark.}
\begin{remark}
When $K$ is the Euclidean ball, the algorithm becomes simpler, and its performance can be
improved to $O^*(n^{56/39})$, as recently shown in \cite{KSS}. The reason is that the intersection
detection subprocedure can be transformed to a dynamic halfspace range reporting in $\reals^5$,
using a simple lifting transform. The latter task can be solved using the machinery of Agarwal and
Matou\v{s}ek~\cite{AM95}, which leads to the above improved bound. See \cite{KSS} for full details.
\end{remark}

\ifsocgproc
\subparagraph*{Minimum spanning forests and Dijkstra's shortest path algorithm.}
\else
\paragraph{Minimum spanning forests and Dijkstra's shortest path algorithm.}
\fi

Let $P$ be a set of $n$ points in $\reals^3$, and let $K$ be as above. As we recall,
for a given $r_0>0$, 
the $r_0$-proximity graph $\G_{r_0}(P)$ (with respect to the metric $\|\cdot\|_K$) 
is the graph whose vertices are the points of $P$, and whose
edges are all the pairs $(p,q)$ such that $\dist_K(p,q) \le r_0$. 
% This is in fact the intersection graph $\G(\K)$, where $\K = \{ K(q,r_0/2) \mid q\in P\}$. 
We assign to each edge $(p,q)\in G_{r_0}(P)$ the weight $\dist_K(p,q)$. We consider the problems 
of computing a minimum spanning forest of $\G_{r_0}(P)$ (note that $\G_{r_0}(P)$ might be disconnected), 
and of computing the shortest-path tree in $\G_{r_0}(P)$ from some start point $K_s$.

For the minimum spanning forest problem, 
we run Prim's algorithm which, at each step, maintains the partition of $P$
into the subsets $V$, $U$, of visited and unvisited points, respectively. The algorithm computes
the bichromatic closest pair (BCP) $(p,q)$ in $V\times U$. If $\dist_K(p,q) \le r_0$, 
we add the edge $(p,q)$ to the MST and move $q$ from $U$ to $V$. Otherwise, we
conclude that the vertices of $U$ form a connected component of $\G_{r_0} (P)$, we discard $U$, 
and continue running Prim's algorithm on $V$ (this case will not arise when $\G_{r_0}(P)$ 
is connected). By Corollary~\ref{cor:bcp}, the BCP of $U$ and $V$ can be maintained in $O^*(n^{1/2})$ 
amortized expected time under insertion/deletion of points, so the overall expected run time 
of the algorithm is $O^*(n^{3/2})$, as claimed in Theorem~\ref{thm:bfs}~(b).

The implementation of Dijkstra's shortest-path algorithm is similar.
At each step of the algorithm, we have the sets $V$ and $U$ of nodes that the algorithm has already visited, and
those that it did not, respectively. Each point $p\in V$ has a weight $\omega(p)$ associated with it, equal to
the shortest distance from the start point $s$ to $p$. We now compute the BCP in $V\times U$, where the
points $p\in V$ are additively weighted by the corresponding distances $\omega(p)$, and the points
of $U$ are unweighted. At each step, we wish to compute the weighted BCP of $V \times U$,
namely $(p^*,q^*) = \arg\min_{(p,q)\in V\times U} \left( \dist_K(p,q)+\omega(p) \right)$. 
If $\dist_K(p^*,q^*) \le r_0$, 
we add $q^*$ as a child of $p^*$, and move $q^*$ from $U$ to $V$, with weight
$\omega(q^*) = \omega(p^*) + \dist_K(p^*,q^*)$. Otherwise, we conclude that all remaining 
points of $U$ are farther than $r_0$ from $p^*$, so none of them is adjacent to $p^*$.
We delete $p^*$ from $V$ and continue.

%Again, using Eppstein's technique, we need dynamic data structures for finding the nearest neighbor in $U$ of a point in $V$, and for finding the nearest neighbor in $V$ of a point in $U$.  The first structure is the one of Section~\ref{sec:ds} for the unweighted point set $U$, and the second structure is on the weighted set $V$. To handle the second case, we replace each point
We implement the computation of $(p^*,q^*)$  using Corollary~\ref{cor:bcp}, as follows.
We choose a sufficiently large parameter $A$, larger than $\max_p \omega(p)$, which can easily be 
computed in advance in $O(n)$ time. Then we replace each point $p\in V$ by the homothetic copy 
$K_p = p + (A-\omega(p))K$. Set $\K_V = \{ K_p \mid p \in V\}$. It is easily seen that if 
$(K_{p^*}, q^*)$ is the BCP of $\K_V \times U$, then $(p^*,q^*)$ is the weighted BCP of 
$V$ and $U$. Hence, each step of the
algorithm can be implemented in $O^*(n^{1/2})$ amortized expected time. 

At each step of the algorithm, we either move a point from $U$ to $V$ or delete a point from $V$.
Since each point of $P$ is moved from $U$ or deleted from $V$ at most once, the overall 
expected time is $O^*(n^{3/2})$, as claimed in Theorem~\ref{thm:bfs}~(b).
\fi

%\micha{Do we want to mumble about the Voronoi diagram under the $K$-distance? Other related stuff?}

%------------------------------------------

\end{document}